\documentclass[draftclsnofoot,onecolumn,12pt]{IEEEtran} %

\usepackage{cite}
\usepackage{amsmath,amssymb,amsfonts}
\usepackage{graphicx}
\usepackage{textcomp}

\usepackage{bm}

\usepackage{amsthm}
\usepackage{cleveref} 
\usepackage{float}
\usepackage{tikz}
\usetikzlibrary{arrows.meta, calc, positioning, shapes.geometric, shadows}
\usepackage{algorithm}
\usepackage{algpseudocode} 
\usepackage{tabularx}
\usepackage{array}
\newcolumntype{Y}{>{\raggedright\arraybackslash}X}
\newcolumntype{C}[1]{>{\centering\arraybackslash}p{#1}} 
\usepackage{url}
\newcommand{\Pin}{P_{\mathrm{in}}}     % input tone power (non-ideal sections)
\newcommand{\Ts}{T_{\mathrm{s}}}       % symbol duration (symbol-wise model)

\newtheorem{lemma}{Lemma}
\newtheorem{theorem}{Theorem}
\newtheorem{corollary}{Corollary}
\newtheorem{remark}{Remark}
\newtheorem{definition}{Definition}

\crefname{algorithm}{Algorithm}{Algorithms}
\Crefname{algorithm}{Algorithm}{Algorithms}

\algrenewcommand\algorithmicrequire{\textbf{Input:}}
\algrenewcommand\algorithmicensure{\textbf{Output:}}

\begin{document}

\title{A Narrowband Fully-Analog Multi-Antenna Transmitter}

\author{Nikola~Zlatanov%
\thanks{N.~Zlatanov is with Innopolis University, Innopolis, 420500, Russia (e-mail: n.zlatanov@innopolis.ru).}%
}

\maketitle

\begin{abstract}
 This paper proposes a narrowband fully-analog $N$-antenna transmitter that emulates the functionality of a narrowband fully-digital $N$-antenna transmitter. Specifically, in symbol interval $m$, the proposed fully-analog transmitter synthesizes an arbitrary complex excitation vector $\bm x[m]\in\mathbb{C}^N$ with prescribed total power $\|\bm x[m]\|_2^2=P$ from a single coherent RF tone, using only tunable phase-control elements embedded in a passive interferometric programmable network. The programmable network is excited through one input port while the remaining $N\!-\!1$ input ports are impedance matched. In the ideal lossless case, the network transfer is unitary and therefore redistributes RF power among antenna ports without dissipative amplitude control.

The synthesis task is posed as a unitary state-preparation problem: program a unitary family so that
$\bm V(\bm\varphi[m])\bm e_1=\bm c[m]$, where $\bm c[m]=\bm x[m]/\sqrt{P}$ and $\|\bm c[m]\|_2=1$. We provide a constructive, parameter-minimal split-then-phase realization and a closed-form programming rule: a balanced binary magnitude-splitting tree allocates the desired per-antenna magnitudes $|c_n|$ using $N\!-\!1$ tunable split ratios, and a per-antenna output phase bank assigns the target phases using $N$ tunable phase shifts. For $N$ a power of two ($N=2^L$), the magnitude tree uses $N-1$ interferometric $2\times2$ fanout cells arranged in $L=\log_2 N$ layers. The resulting architecture uses exactly $2N-1$ real tunable degrees of freedom and admits a deterministic $O(N)$ programming procedure with no iterative optimization, enabling symbol-by-symbol updates when the chosen phase-control technology supports the required tuning speed.

Using representative  commercial off-the-shelf (COTS) components, we model the compute-excluded RF-front-end DC power of the proposed fully-analog transmitter and compare it against an equivalent  COTS fully-digital array. For $N\le 16$, the comparison indicates significant RF-front-end power savings for the fully-analog architecture under a common delivered antenna-port power normalization.

The results are established under an ideal, lossless, and perfectly matched network model and are intended as a proof-of-concept for a narrowband fully-analog transmitter. Detailed treatment of hardware non-idealities such as finite phase resolution, tuning-dependent loss/imbalance, parasitics, calibration procedures, and over-the-air validation is left to future experimental work.
\end{abstract}

\begin{IEEEkeywords}
Fully-analog transmitter, MIMO precoding, spatial multiplexing, multi-antenna, interferometric programmable network.
\end{IEEEkeywords}

 % ============================================================
% Section 1: Introduction
% ============================================================
\section{Introduction}
\label{sec:introduction}

Large antenna arrays enable high beamforming gain, spatial selectivity, and improved link budgets,
which are central to modern wireless systems (e.g., mmWave) and dense deployments. In current
practice, high-performance multi-antenna transmitters are typically realized via \emph{multiple RF
chains} (fully-digital transmitter) or \emph{hybrid} analog/digital front ends. Both approaches
scale hardware complexity with the number of antennas through mixers, local-oscillator distribution,
digital-to-analog converters (DACs), filtering, and per-branch calibration, which become
increasingly costly and power-hungry as $N$ grows; see, e.g., \cite{mendez2016hybrid}. These
realities motivate \emph{fully-analog} transmitter architectures in which a large subset of this
hardware can be eliminated.

A core obstacle for fully-analog architectures is the synthesis of \emph{arbitrary} complex
per-antenna excitations. Phase-shifter-only networks naturally provide phase control at
each branch, but \emph{independent} amplitude control is not available. Achieving both amplitude and
phase control typically requires either (i) additional active elements such as attenuators or
variable-gain components, and/or (ii) magnitude shaping via constructive and destructive summation of
multiple tones with adjusted phases, which can introduce loss and sensitivity to hardware
non-idealities \cite{mailloux2005phasedarray}.

In this work we propose a narrowband fully-analog $N$-antenna transmitter that, in symbol interval $m$, synthesizes an
arbitrary complex excitation vector $\bm x[m]\in\mathbb{C}^N$ with prescribed total power
$\|\bm x[m]\|_2^2=P$ from a single coherent RF tone using a passive interferometric programmable
network. The network is excited through one input port while the remaining inputs are impedance
matched. The key design principle is to enforce an (ideal) \emph{unitary} transfer inside the
programmable network so that RF power is \emph{redistributed among ports} rather than dissipated
internally. In the ideal lossless model, this yields exact synthesis of any target vector satisfying
$\|\bm x[m]\|_2^2=P$ without programmable attenuators or active gain stages.

We target symbol-wise operation: during each symbol interval the desired complex vector $\bm x[m]$ (as
determined by the baseband precoder and the transmitted data symbols) is realized by programming the
tunable controls of the passive network. 
Accordingly, aside from the  total-power constraint, there is no restriction on the
sequence $\{\bm x[m]\}$: the network settings   can be updated independently each
symbol interval to realize any desired $\bm x[m]$.
Consequently, the achievable symbol rate is limited by the
update and settling time of the chosen phase-control technology and its control interface, rather
than by iterative optimization latency. The programming rule developed here is explicitly
constructive, closed form, and requires only $O(N)$ scalar operations to compute the $2N-1$ tunable
degrees of freedom needed to realize an arbitrary unit-norm excitation direction.

A central ingredient is a parameter-minimal \emph{split-then-phase} state-preparation realization.
Writing $\bm c[m]=\bm x[m]/\sqrt{P}$ and $\bm c[m]=[|c_1[m]|e^{j\angle c_1[m]},\dots,|c_N[m]|e^{j\angle c_N[m]}]^T$ with
$\|\bm c[m]\|_2=1$, the proposed programmable network is organized as:
(i) a balanced binary magnitude-splitting tree that allocates the desired per-antenna magnitudes
$|c_n[m]|$ using only tunable split ratios at internal nodes, followed by
(ii) a per-antenna output phase bank that assigns the target phases $\angle c_n[m]$.
The magnitude tree is realized by a binary tree of interferometric $2\times2$ fanout cells (e.g.,
MZI/hybrid-based) operating in a single-input ``fanout'' mode where one input is matched and
unexcited; this operating mode admits a deterministic, closed-form split programming rule based on
subtree norms. Because each $2\times2$ splitter introduces a known fixed branch phase convention
(e.g., a quadrature $+\pi/2$ offset on one branch under an MZI model), the magnitude tree produces
deterministic per-output phase offsets that depend only on the root-to-leaf path. These offsets are
then compensated by the output phase bank. The resulting architecture uses exactly $N-1$ split
controls and $N$ output phase controls, totaling $2N-1$ real tunable degrees of freedom, which
matches the real dimension of the complex unit sphere and is therefore parameter-minimal for
universal single-vector synthesis.

For clarity of exposition, the main development focuses on $N$ as a power of two ($N=2^L$), which
enables a balanced binary-tree implementation with depth $L=\log_2 N$; the general non-power-of-two case is
handled later via zero-padding and optional tree pruning.

An important implication is that the proposed architecture is \emph{spatially fully-digital-equivalent} in the
narrowband sense: since it can synthesize an arbitrary complex transmit vector $\bm x[m]\in\mathbb{C}^N$
(subject to $\|\bm x[m]\|_2^2=P$) at symbol cadence, it is not restricted to rank-one analog beamforming.
Instead, it can execute the same narrowband linear-precoding operations as a fully-digital multi-antenna transmitter,
including spatial multiplexing and multi-user MIMO, by realizing $\bm x[m]=\bm W\,\bm d[m]$ for any desired
precoder $\bm W$ and data vector $\bm d[m]$, in each symbol interval $m$. The resulting system-level trade-off is therefore primarily
\emph{bandwidth versus RF-chain energy}: the synthesis is narrowband (single tone or single OFDM subcarrier), but
the elimination of $N$ RF chains yields substantially lower RF-front-end DC power, which can improve energy
efficiency and partially offset the throughput penalty of narrowband operation when MIMO gains are exploited.

\subsection{Related work and the gap}
\label{subsec:related_work}

Classical analog beamforming networks (BFNs) include corporate-feed dividers, Butler (DFT) matrices,
and lens-based networks, which provide low-loss distribution and beam steering with varying degrees
of flexibility \cite{mailloux2005phasedarray}. Hybrid beamforming reduces the number of RF chains
relative to fully-digital arrays, but still requires distributed phase-control networks and
calibration, and its hardware complexity and power remain significant at large $N$
\cite{mendez2016hybrid}. Across these families, producing \emph{arbitrary} complex per-antenna
excitations generally requires explicit amplitude-control elements or lossy magnitude shaping.

A different approach to reducing RF-chain complexity is to exploit array physics and loading. In
beamspace/parasitic-array architectures, symbols are mapped to a small set of basis radiation
patterns \cite{alrabadi2012singleRF}. In load-modulated arrays, tunable reactive loads shape currents
and beams with a single RF source \cite{sedaghat2016load,hong2015loadmodulated}. These methods can be
effective when the antenna coupling and the load network are engineered and calibrated accurately,
but they typically entangle signal synthesis with mutual coupling, matching, and power-amplifier
loading, which complicates calibration and can limit robustness.

In integrated photonics, universal interferometric meshes implement arbitrary unitary transforms
using networks of $2\times2$ building blocks \cite{reck1994experimental,clements2016optimal,carolan2015universal},
with extensive work on programmability and calibration
\cite{miller2013selfconfiguring,bogaerts2020programmable,harris2018linear,perez2018fieldprogrammable}.

In the RF/microwave domain, \cite{keshavarz2025programmable} demonstrated a programmable unitary
processor using an interlaced architecture that targets realization of an arbitrary $N\times N$
unitary transformation, which typically requires $O(N^2)$ tunable elements and is programmed via
iterative optimization with $O(N^3)$ computational complexity.
Our paper targets a different synthesis objective: rather than realizing an arbitrary
$N\times N$ unitary, we only need to synthesize a \emph{single} desired excitation vector from a
single-tone input. This restriction enables a substantially lighter programmable architecture with
only $2N-1$ tunable degrees of freedom (matching the dimension of the complex unit sphere), together
with a closed-form programming rule that avoids iterative optimization and is therefore compatible
with low-latency, symbol-wise reconfiguration.

At a high level, the proposed magnitude-splitting stage resembles a \emph{corporate-feed} divider: both distribute
power from a single source to $N$ antenna ports through a binary splitting topology
(e.g., corporate-feed tapers in phased arrays~\cite{mailloux2005phasedarray,huang2015amplitudetapering,karimi2022amplitudetapered}).
However, the \emph{objective and operating regime} are fundamentally different.
Corporate-feed tapers are typically designed to realize a \emph{fixed} (or slowly reconfigurable) \emph{real,
nonnegative} amplitude law across the aperture (often with symmetry constraints) to shape the array factor and
reduce sidelobes, while maintaining (approximately) equal output phase for a chosen pointing direction~\cite{mailloux2005phasedarray,huang2015amplitudetapering,karimi2022amplitudetapered}.
In contrast, this paper targets \emph{universal, symbol-wise} synthesis of an \emph{arbitrary complex} excitation
vector $\bm x[m]\in\mathbb{C}^N$ under a fixed total-power constraint $\|\bm x[m]\|_2^2=P$.
That is, we must realize \emph{any} unit-norm direction $\bm c[m]=\bm x[m]/\sqrt{P}$ (including arbitrary per-antenna
phases) from a \emph{single coherent tone} by reprogramming the passive network each symbol interval.
Moreover, classical taper networks treat splitter-induced phase offsets and path-length differences as
impairments to be physically equalized~\cite{mailloux2005phasedarray,karimi2022amplitudetapered}; here we instead
exploit an interferometric splitting convention with \emph{known deterministic} branch phase offsets inside the
magnitude tree and compensate them explicitly using a dedicated per-antenna output phase bank.

A second closely related family is \emph{reconfigurable} divider trees and variable power-distribution networks
used for antenna pattern reconfiguration (e.g., switched or multimode beamforming networks)~\cite{angeletti2014multimode,fan2016switched}.
While these architectures can also employ tunable split ratios, they are commonly specified for (i) a \emph{finite}
set of routing/power states~\cite{kim2004reconfigurable,tae2012reconfigurable,alazemi2019fiveport}, or (ii) relatively slow beam/pattern adaptation, and their control is typically
expressed in terms of beam or coverage objectives rather than \emph{exact} realization of an arbitrary
$\bm c[m]\in\mathbb{C}^N$ at symbol cadence.\footnote{When continuous tuning is available, control is often implemented
via bias-controlled tuning and pre-characterized settings (look-up tables) and/or objective-driven adjustment of the tuning variables~\cite{hu2016compact,alazemi2019fiveport},
rather than a closed-form, parameter-minimal state-preparation rule.}
By contrast, this paper restricts attention to the \emph{single-input fanout} operating mode in which only one
network input is driven and the remaining $N\!-\!1$ inputs are impedance matched.
This reduced but practically important objective---\emph{state preparation} $\bm V(\bm\varphi[m])\bm e_1=\bm c[m]$ from
one driven port---enables a substantially lighter and more direct synthesis mechanism than general-purpose
reconfigurable BFNs.

Compared with conventional multi-antenna transmitter/front-end architectures, the proposed fully-analog
transmitter addresses the symbol-by-symbol synthesis of an arbitrary complex excitation vector from a single
coherent RF tone. A fully-digital array can generate an arbitrary
$\bm x[m]\in\mathbb{C}^N$ directly in the digital/RF-chain domain by using $N$ independent RF chains, but this
flexibility comes with RF-chain count, power consumption, synchronization/calibration burden, and hardware cost
that scale with $N$ \cite{mendez2016hybrid}. Hybrid analog/digital architectures reduce the number of RF chains
to $N_{\mathrm{RF}}<N$, but the analog stage is commonly constrained by phase-shifter or switch networks, so exact
symbol-by-symbol synthesis of an arbitrary complex aperture vector is limited, and is also not the native objective
\cite{mendez2016hybrid}. Conventional corporate-feed and tapering networks can also operate from a single RF
source \cite{mailloux2005phasedarray,huang2015amplitudetapering,karimi2022amplitudetapered}, and reconfigurable
divider-tree variants can introduce additional switching or tuning states
\cite{angeletti2014multimode,fan2016switched,kim2004reconfigurable,tae2012reconfigurable,alazemi2019fiveport,hu2016compact};
however, such networks are typically designed for fixed, slowly reconfigurable, finite-state, or prescribed
beam/pattern objectives rather than for exact symbol-by-symbol synthesis of an arbitrary complex excitation
vector. General programmable unitary meshes address the more general task of realizing an arbitrary
$N\times N$ unitary transformation and therefore require $O(N^2)$ tunable elements and, in many implementations,
iterative decomposition, optimization, or calibration procedures
\cite{reck1994experimental,clements2016optimal,carolan2015universal,miller2013selfconfiguring,bogaerts2020programmable,harris2018linear,perez2018fieldprogrammable,keshavarz2025programmable}.
In contrast, the architecture proposed in this paper does not attempt to synthesize an arbitrary $N\times N$
unitary matrix. It synthesizes only the single $N$-dimensional output vector required by the transmitter, namely
the desired antenna excitation vector $\bm x[m]$ from a single coherent RF tone. This  objective admits achievability via a binary tree of lossless variable splitters followed by output phase shifters, requiring only
$N-1$ magnitude-splitting controls and $N$ output phase controls, for a total of $2N-1$ tunable real quantities.
Moreover, the required settings are obtained by a deterministic closed-form programming rule with $O(N)$
arithmetic complexity, rather than by a general-purpose $N\times N$ unitary synthesis or iterative optimization
procedure.

The present contribution is therefore not merely a reconfigurable corporate-feed network, but a
\emph{parameter-minimal, closed-form programmable} unitary state-preparation architecture that emulates a
narrowband fully-digital $N$-antenna transmitter under a fixed total-power constraint.

\subsection{Contribution}
\label{subsec:contribution}

The present paper develops a fully-analog transmitter architecture and a constructive programming
rule for arbitrary complex $N$-antenna excitation synthesis from a single coherent tone under an
ideal unitary network model. Our main contributions are as follows:
\begin{itemize}
\item A fully-analog narrowband multi-antenna transmitter architecture that emulates the functionality of a narrowband fully-digital multi-antenna transmitter with a fixed total-power constraint: in symbol interval $m$, synthesize arbitrary complex $N$-antenna excitation vectors $\bm x[m]\in\mathbb{C}^N$ with $\|\bm x[m]\|_2^2=P$ from a single coherent RF tone injected into one port of a passive multiport network with the remaining inputs matched.

\item Since the architecture can synthesize any
$\bm x[m]\in\mathbb{C}^N$ at symbol cadence (subject to $\|\bm x[m]\|_2^2=P$), it can realize the same narrowband
linear-precoding/MU-MIMO transmit vectors $\bm x[m]=\bm W\,\bm d[m]$ as a fully-digital multi-antenna transmitter on a single
carrier or OFDM subcarrier, while using only one RF chain.

\item A unitary state-preparation formulation $\bm V(\bm\varphi[m])\bm e_1=\bm c[m]$ with $\bm c[m]=\bm x[m]/\sqrt{P}$ and $\|\bm c[m]\|_2=1$, implying that any antenna-port excitation $\bm x[m]$ satisfying $\|\bm x[m]\|_2^2=P$ can be realized by programming a lossless passive network in the ideal model.

\item A parameter-minimal split-then-phase realization: a balanced binary magnitude-splitting tree allocates the desired per-antenna magnitudes using $N-1$ tunable split ratios, and a per-antenna output phase bank assigns the desired phases using $N$ tunable phase shifts, for a total of $2N-1$ real tunable degrees of freedom.

\item A deterministic closed-form $O(N)$ programming rule (no iterative optimization) that computes the internal split settings from subtree norms and computes the output phase-bank settings by compensating the deterministic (or calibrated) phase offsets introduced by the magnitude tree. This supports symbol-wise updates when the phase-control technology supports the required tuning speed.

\item An idealized passive hardware realization as a depth-$L=\log_2 N$ balanced binary tree of interferometric $2\times 2$ fanout cells (Mach--Zehnder/hybrid-based) followed by a per-antenna output phase-shifter bank, providing a fully-analog surrogate for a fully-digital multi-antenna transmitter: one coherent source/power amplifier (PA) and a programmable passive distributor in place of $N$ RF chains.

\item A compute-excluded RF-front-end power model anchored to representative   commercial off-the-shelf (COTS) components, and a comparison against an equivalent COTS fully-digital array under a common delivered antenna-port power normalization. The numerical results for $N\le 16$ indicate significant RF-front-end DC-power savings for the fully-analog architecture.
\end{itemize}

While we focus on multi-antenna excitation synthesis for wireless transmission, the same phase-only
unitary synthesis principle applies to other coherent multi-output tone generation settings,
including coherent test/measurement distribution networks and array excitation for radar/sonar/
ultrasound systems.

\subsection{Scope and limitations}
\label{subsec:scope_limitations}

This paper is a proof-of-concept study that establishes an exact phase-only synthesis principle and a minimal
parameterization for generating an arbitrary transmit vector direction described by $\bm c[m]\in\mathbb{C}^N$ with
$\|\bm c[m]\|_2=1$ in symbol interval $m$ from a single coherent tone using an ideal unitary passive network.
Accordingly, we adopt a narrowband phasor model and ideal component assumptions to isolate the core synthesis
mechanism and its algorithmic/hardware mapping.

We explicitly do \emph{not} attempt to validate a fabricated RF prototype or to exhaustively characterize
implementation non-idealities. In particular, topics such as finite phase quantization, tuning-dependent loss and
amplitude/phase imbalance, non-unitary transfer due to imperfect matching and coupler errors, calibration
procedures and LUT construction, PA--load interaction, mutual coupling at the antenna ports, and spectral
regrowth due to control transients are important but are outside the scope of the present proof-of-concept and
are left for future hardware-focused work.

Consequently, the validation provided in this paper is analytical and algorithmic: exact synthesis is established
by Theorem~\ref{thm:exact_state_preparation} under the ideal unitary model, and the closed-form programming rule
is verified by the constructive derivation in Section~\ref{sec:exact_synthesis}. Fabricated-hardware validation,
device-specific calibration, and over-the-air measurements are reserved for a follow-up dedicated experimental
implementation study.

\subsection{Paper organization}
\label{subsec:paper_organization}

\Cref{sec:system_model} introduces the phasor model, the fully-analog transmitter architecture, and
the synthesis objective. \Cref{sec:exact_synthesis} develops the split-then-phase unitary
state-preparation construction and presents the closed-form programming rule for the magnitude tree
and output phase bank. \Cref{sec:hardware} maps the construction to an idealized passive hardware
realization as a binary-tree interferometric network followed by per-antenna phase control.
\Cref{sec:nonideal_framework} introduces a passive contractive non-ideal model and establishes the
error metrics and delivered-power normalization used for efficiency comparisons.
\Cref{sec:numerical_results} reports numerical results and practical implications. Bold lowercase
letters denote vectors and bold uppercase letters denote matrices; $(\cdot)^T$ and $(\cdot)^H$ denote
transpose and conjugate transpose, and $\|\cdot\|_2$ is the Euclidean norm.

 % ============================================================
% Section 2: System Model and Problem Formulation
% ============================================================
\section{System model and problem formulation}
\label{sec:system_model}

\subsection{Narrowband transmitter model}
\label{subsec:phasor_model}

We consider a narrowband $N$-antenna wireless transmitter operating at carrier frequency $f_c$. Over
symbol interval $m$ of duration $\Ts$, the $n$th antenna radiates the real RF waveform
\begin{align}
&x_n(t) = \sqrt{2}\Re\!\left\{ x_n[m]\,e^{j 2\pi f_c t}\right\},\\
&
\textrm{for }t\in[m\Ts,(m+1)\Ts),\ \ n=1,\dots,N,\nonumber
\end{align}
where $x_n[m]\in\mathbb{C}$ is the complex phasor that determines the amplitude $|x_n[m]|$ and phase
$\angle x_n[m]$ at antenna $n$ during symbol interval $m$. Collecting all antenna excitations yields
\begin{equation}
\bm x[m] \triangleq [x_1[m],\dots,x_N[m]]^T \in \mathbb{C}^N.
\end{equation}
Under the standard narrowband interpretation, the total radiated power\footnote{Under idealized conditions for antenna efficiencies and coupling. In non-ideal settings, $\|\bm x[m]\|_2^2$ is the total delivered (conducted) power at the antenna feed ports during symbol interval $m$.} over symbol interval $m$ is
$\|\bm x[m]\|_2^2$ and the power at antenna $n$ is $|x_n[m]|^2$. Note that the symbol rate is proportional to $1/\Ts$.

The phasor $x_n[m]$ can be interpreted as the (narrowband) complex-envelope sample driving antenna
$n$ over symbol interval $m$. In wireless communications, $\bm x[m]$ is produced jointly by a
baseband precoder and data symbols. The baseband precoder stays constant over many symbol intervals,
whereas the data symbols change in each symbol interval. 

For example, in a narrowband $S$-stream (or multi-user) MIMO transmission, $\bm d[m]\in\mathbb{C}^{S}$ collects
the $S$ data streams/users and $\bm W\in\mathbb{C}^{N\times S}$ is a linear precoder/beamformer, so that the
desired antenna vector is $\bm x[m]=\bm W\,\bm d[m]$ (with the usual per-symbol normalization chosen so that
$\|\bm x[m]\|_2^2=P$ holds).
Because the proposed fully-analog transmitter can synthesize \emph{any} $\bm x[m]$ satisfying
$\|\bm x[m]\|_2^2=P$, it can implement this MIMO mapping \emph{identically to a fully-digital array} on the
considered narrowband carrier (or per OFDM subcarrier), but with only a single RF chain and a passive
power-redistribution network.

The aim of this paper is to propose a narrowband fully-analog transmitter that, in symbol interval
$m$, synthesizes any desired $\bm x[m]$, such that $\|\bm x[m]\|_2^2=P$ holds. Thereby, the task of
synthesizing $\bm x[m]$ as $\bm x[m]=\bm W\,\bm d[m]$, such that $\|\bm x[m]\|_2^2=P$ holds, is also
achieved.

All synthesis and hardware derivations in this paper apply \emph{symbol-wise}. To keep notation
compact, we will often omit the explicit symbol index $[m]$ when no ambiguity arises.

\subsection{Fully-analog transmitter architecture}
\label{subsec:arch_model}

The transmitter is driven by a single coherent RF tone of power $P$,
\begin{equation}
s(t)=\sqrt{2}\Re\!\left\{ \sqrt{P}\,e^{j 2\pi f_c t}\right\}.
\end{equation}
This tone is injected into one input port of an $N\times N$ (i.e., $N$-input/$N$-output) passive
interferometric network that contains only fixed passive couplers/splitters and tunable
phase-control elements.\footnote{In the proposed realization, tunable phase control is used in two roles:
(i) to set interferometric \emph{split ratios} inside a magnitude-splitting tree (via differential phase in
$2\times2$ fanout cells), and (ii) to apply \emph{per-antenna} output phase shifts in a diagonal phase bank.}
The remaining $N-1$ input ports are terminated in matched loads, so their incident waves are zero.

We represent the programmable network by a narrowband complex matrix
$\bm V(\bm\varphi)\in\mathbb{C}^{N\times N}$ parameterized by tunable settings
$\bm\varphi\in\mathbb{R}^{M}$. The matrix $\bm V(\bm\varphi)$ maps the complex wave (phasor) vector at
the $N$ input reference planes to the complex wave (phasor) vector at the $N$ output reference
planes. We adopt standard power-wave normalization, so that (under matched and lossless conditions)
power preservation corresponds to a unitary mapping.

Under single-port excitation, let $\bm e_1\triangleq[1,0,\dots,0]^T$ denote the first canonical basis
vector. With matched unused inputs, the incident-wave vector at the network input reference planes is
\begin{equation}
\bm x_{\mathrm{in}} \triangleq \sqrt{P}\,\bm e_1\in\mathbb{C}^N.
\label{eq:single_port_excitation}
\end{equation}
The resulting antenna excitation vector at the network output reference planes is
\begin{equation}
\bm x = \bm V(\bm\varphi)\,\bm x_{\mathrm{in}}.
\label{eq:overall_mapping}
\end{equation}
No programmable attenuators and no active gain elements are assumed inside the synthesis network.

In this paper we construct $\bm V(\bm\varphi)$ in a parameter-minimal \emph{split-then-phase} form,
\begin{align}
&\bm V(\bm\varphi)
\;=\;
\underbrace{\bm P(\bm\vartheta)}_{\text{output phase bank}}\;
\underbrace{\bm V_{\mathrm{mag}}(\bm\alpha)}_{\text{magnitude tree}},\label{eq:split_then_phase_factorization}
\\
&\bm P(\bm\vartheta)\triangleq \mathrm{diag}\!\big(e^{j\vartheta_1},\dots,e^{j\vartheta_N}\big),\nonumber
\end{align}
where $\bm V_{\mathrm{mag}}(\bm\alpha)$ is a balanced binary-tree magnitude-splitting network controlled by
$N-1$ real parameters $\bm\alpha$ (split ratios) and $\bm P(\bm\vartheta)$ is a diagonal output phase bank with
$N$ tunable phases $\bm\vartheta=[\vartheta_1,\dots,\vartheta_N]^T$. Thus $M=(N-1)+N=2N-1$ real tunable degrees
of freedom are used, which matches the real dimension of the complex unit sphere and is therefore
parameter-minimal for universal single-vector synthesis. The concrete tree parameterization and the associated
closed-form programming rule are developed in \Cref{sec:exact_synthesis}.

\subsection{Ideal lossless model and unitary constraint}
\label{subsec:unitary_model}

In the idealized setting of perfectly matched, reciprocal, and lossless passive components (and
lossless phase-control elements), the network is power-preserving. Under the adopted normalization,
this corresponds to the unitary constraint
\begin{equation}
\bm V(\bm\varphi)^H \bm V(\bm\varphi) = \bm I_N,\qquad \text{(ideal lossless model)}.
\label{eq:unitary_constraint}
\end{equation}
Consequently, the network preserves the $\ell_2$-norm and therefore preserves total power:
\begin{equation}
\|\bm x\|_2^2
=
\|\bm V(\bm\varphi)\bm x_{\mathrm{in}}\|_2^2
=
\|\bm x_{\mathrm{in}}\|_2^2
=
|\sqrt{P}|^2
=
P.
\label{eq:power_preservation}
\end{equation}

\subsection{Design objective: prescribed transmit vector with prescribed total power}
\label{subsec:design_objective}

Given a desired transmit vector $\bm x\in\mathbb{C}^N$ with prescribed total power
\begin{equation}
\|\bm x\|_2^2 = P,\qquad P>0,
\label{eq:power_constraint}
\end{equation}
our goal is to propose a hardware-implementable unitary matrix family $\bm V(\bm\varphi)$ and to
determine the tunable settings $\bm\varphi$ such that the fully-analog transmitter \eqref{eq:overall_mapping}
generates $\bm x$ from the single coherent input tone.

Defining the unit-norm vector direction
\begin{equation}
\bm c \triangleq \frac{\bm x}{\sqrt{P}}\in\mathbb{C}^N,\qquad \|\bm c\|_2=1,
\label{eq:normalized_target}
\end{equation}
the synthesis task reduces to the following state-preparation problem:
\begin{equation}
\text{given }\bm c\in\mathbb{C}^N \text{ with }\|\bm c\|_2=1,\ \text{find }\bm\varphi\text{ s.t.: }
\bm V(\bm\varphi)\,\bm e_1 = \bm c.
\label{eq:state_preparation_problem}
\end{equation}
Under the split-then-phase structure \eqref{eq:split_then_phase_factorization}, this problem becomes: choose
the split settings $\bm\alpha$ so that $\bm V_{\mathrm{mag}}(\bm\alpha)\bm e_1$ has magnitudes $|c_n|$, and then
choose $\bm\vartheta$ so that the diagonal phase bank produces the desired phases $\angle c_n$.

\begin{remark}
For clarity of the constructive development and the associated binary-tree hardware mapping, in
\Cref{sec:exact_synthesis,sec:hardware} we focus on the case where $N$ is a power of two, i.e.,
$N=2^L$ for some integer $L\ge 1$. Handling non-power-of-two $N$ is addressed in
Section~\ref{subsec:general_N_padding} via zero-padding and optional tree pruning.
\end{remark}

\begin{remark}
If the injected tone has an unknown/common phase offset $e^{j\theta_s}$ relative to the calibration reference
(e.g., due to LO phase), then the ideal lossless network produces $e^{j\theta_s}\bm x$, i.e., the desired vector
up to a common phase rotation. Exact reproduction of $\bm x$ is recovered by subtracting $\theta_s$ from all
programmed output phases in $\bm P(\bm\vartheta)$ (equivalently, by programming the network for the rotated
direction $e^{-j\theta_s}\bm c$); see \Cref{sec:exact_synthesis}.
\end{remark}

% ============================================================
% Section 3: Exact Phase-Only Synthesis via Unitary State Preparation
% ============================================================
\section{Exact phase-only synthesis via unitary state preparation}
\label{sec:exact_synthesis}

This section derives a closed-form programming rule for generating a desired transmit vector
$\bm x\in\mathbb{C}^N$ from the single-port excitation $\bm x_{\mathrm{in}}=\sqrt{P}\bm e_1$.
Equivalently, we seek a lossless (unitary) matrix family $\bm V(\bm\varphi)$ such that
\begin{equation}
\bm V(\bm\varphi)\bm e_1=\bm c,\qquad \bm c\triangleq \bm x/\sqrt{P},\ \ \|\bm c\|_2=1.
\label{eq:state_preparation_problem_sec3}
\end{equation}
We use the split-then-phase factorization introduced in \Cref{sec:system_model}:
a balanced binary \emph{magnitude tree} $\bm V_{\mathrm{mag}}(\bm\alpha)$ allocates the desired per-antenna
magnitudes $|c_n|$, and a diagonal \emph{output phase bank} $\bm P(\bm\vartheta)$ assigns the desired phases
$\angle c_n$.

We first develop the construction for the balanced case in which the number of antennas is a power of two,
\begin{equation}
N=2^L,\qquad L\in\mathbb{Z}_{\ge 1}.
\label{eq:N_power_of_two}
\end{equation}
This case gives the cleanest binary-tree notation and enables a balanced realization with depth $L=\log_2 N$
and a closed-form $O(N)$ programming procedure. The general non-power-of-two case is then handled by a concise
zero-padding/pruning extension in Section~\ref{subsec:general_N_padding}, without changing the main programming
principle.

% ------------------------------------------------------------
\subsection{Primitive $2\times 2$ fanout splitter with fixed relative phase}
\label{subsec:splitter_cell}

The magnitude tree is built from a tunable $2\times 2$ interferometric splitter whose split ratio is controlled
by a single real parameter.

\begin{definition}[$2\times 2$ splitter cell]
\label{def:splitter}
Define the unitary $2\times 2$ splitter
\begin{equation}
\bm U(\alpha)\triangleq
\begin{bmatrix}
\cos\alpha & j\sin\alpha\\
j\sin\alpha & \cos\alpha
\end{bmatrix},\qquad \alpha\in[0,\pi/2].
\label{eq:Ualpha}
\end{equation}
\end{definition}

\begin{lemma}[Lossless splitting on an input $(a,0)$ with a known branch phase offset]
\label{lem:splitting}
The matrix $\bm U(\alpha)$ is unitary. Moreover, for any $a\in\mathbb{C}$,
\begin{equation}
\bm U(\alpha)\begin{bmatrix}a\\0\end{bmatrix}
=
\begin{bmatrix}
a\cos\alpha\\
j\,a\sin\alpha
\end{bmatrix}.
\label{eq:split}
\end{equation}
Thus $\alpha$ controls the magnitude split while the relative phase between the two outputs is a \emph{fixed}
known offset (here $+\pi/2$ on the ``$\sin$'' branch under the convention of \eqref{eq:Ualpha}), and power is
preserved: $|a|^2=|a\cos\alpha|^2+|a\sin\alpha|^2$.
\end{lemma}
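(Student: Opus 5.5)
The plan is a direct computation with $\bm U(\alpha)$ from \eqref{eq:Ualpha}, in three short steps: unitarity, the action on $(a,0)^T$, and power conservation.

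\textbf{Unitarity.} Form the conjugate transpose
\[
\bm U(\alpha)^H=\begin{bmatrix}\cos\alpha & -j\sin\alpha\\ -j\sin\alpha & \cos\alpha\end{bmatrix},
\]
using that $\alpha$ is real, so $\cos\alpha$ and $\sin\alpha$ are real. Then multiply entrywise.

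The diagonal entries of $\bm U(\alpha)^H\bm U(\alpha)$ are $\cos^2\alpha+(-j)(j)\sin^2\alpha=\cos^2\alpha+\sin^2\alpha=1$.

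The off-diagonal entries are $j\cos\alpha\sin\alpha-j\sin\alpha\cos\alpha=0$.

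Hence $\bm U(\alpha)^H\bm U(\alpha)=\bm I_2$. Since the matrix is square, this also gives $\bm U(\alpha)\bm U(\alpha)^H=\bm I_2$.

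\textbf{Action on $(a,0)^T$.} Multiply $\bm U(\alpha)$ by $[a,0]^T$. Only the first column of $\bm U(\alpha)$ contributes, which gives $[a\cos\alpha,\ j a\sin\alpha]^T$. This is \eqref{eq:split}.

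For the phase interpretation, take $\alpha\in(0,\pi/2)$. Then $\cos\alpha>0$ and $\sin\alpha>0$. So $\angle(a\cos\alpha)=\angle a$ and $\angle(j a\sin\alpha)=\angle a+\pi/2$. The relative phase between the two outputs is therefore the fixed value $\pi/2$, independent of $\alpha$ and of $a$. The magnitudes are $|a|\cos\alpha$ and $|a|\sin\alpha$, so $\alpha$ alone controls the split ratio.

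At the endpoints $\alpha\in\{0,\pi/2\}$, one branch has zero amplitude. In that case its phase is immaterial, and nothing needs checking.

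\textbf{Power conservation.} Compute
\[
|a\cos\alpha|^2+|ja\sin\alpha|^2=|a|^2(\cos^2\alpha+\sin^2\alpha)=|a|^2.
\]
This also follows from unitarity, since unitary matrices preserve the Euclidean norm.

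\textbf{Main obstacle.} There is no real obstacle. The only point needing care is the phase statement: it requires $\cos\alpha$ and $\sin\alpha$ to be nonnegative. This is exactly why $\alpha$ is restricted to $[0,\pi/2]$; otherwise a sign flip would add an extra $\pi$ to the phase. I would mention this restriction explicitly in the proof.
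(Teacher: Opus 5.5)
Your proposal is correct and follows the same route as the paper, which simply states that a direct computation gives $\bm U(\alpha)^H\bm U(\alpha)=\bm I_2$ and that \eqref{eq:split} follows by multiplication. Your added observation that the fixed $+\pi/2$ offset relies on $\cos\alpha,\sin\alpha\ge 0$ for $\alpha\in[0,\pi/2]$ is a sensible detail that the paper leaves implicit.
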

\begin{IEEEproof}
A direct computation gives $\bm U(\alpha)^H\bm U(\alpha)=\bm I_2$. The relation \eqref{eq:split} follows by
multiplication.
\end{IEEEproof}

% ------------------------------------------------------------
\subsection{Balanced binary-tree magnitude network}
\label{subsec:magnitude_tree}

We now construct an $N\times N$ unitary $\bm V_{\mathrm{mag}}(\bm\alpha)$ that allocates magnitudes using only
split parameters $\{\alpha_{\ell,i}\}$ arranged on a balanced binary tree.

\subsubsection{Tree indexing (nodes, leaves, and contiguous leaf sets)}

We index internal nodes by levels $\ell\in\{1,2,\dots,L\}$, where $\ell=1$ is the root and $\ell=L$ is
the last internal level. Leaves correspond to antenna indices and are placed conceptually at level
$\ell=L+1$, indexed by $n\in\{1,\dots,N\}$.

At level $\ell$ there are $2^{\ell-1}$ nodes indexed by $i\in\{1,\dots,2^{\ell-1}\}$. Node $(\ell,i)$
covers the contiguous leaf set
\begin{equation}
S(\ell,i)\triangleq\Big\{(i-1)2^{L-\ell+1}+1,\dots,i2^{L-\ell+1}\Big\},
\label{eq:bt_S_def}
\end{equation}
so $S(1,1)=\{1,\dots,N\}$ and $S(L,i)=\{2i-1,2i\}$.

For $\ell\in\{1,\dots,L-1\}$ the children of node $(\ell,i)$ are
\begin{equation}
(\ell,i)_{\mathrm{Left}}=(\ell+1,2i-1),\qquad
(\ell,i)_{\mathrm{Right}}=(\ell+1,2i).
\label{eq:bt_children}
\end{equation}

\subsubsection{Embedding a $2\times 2$ splitter into an $N\times N$ matrix}

For indices $1\le m<n\le N$, let $\bm U_{m,n}(\alpha)\in\mathbb{C}^{N\times N}$ denote the identity matrix
$\bm I_N$ except on rows/columns $\{m,n\}$, where the $2\times2$ principal submatrix is replaced by
$\bm U(\alpha)$ in \eqref{eq:Ualpha}. Thus $\bm U_{m,n}(\alpha)$ is unitary and acts nontrivially only on
coordinates $m$ and $n$.

\subsubsection{Which coordinate pairs are mixed at level $\ell$?}

Define the \emph{entry index} (first leaf of the block) of node $(\ell,i)$ as
\begin{equation}
p(\ell,i)\triangleq (i-1)2^{L-\ell+1}+1,
\label{eq:bt_root_index}
\end{equation}
and define the corresponding \emph{split partner index} as
\begin{equation}
q(\ell,i)\triangleq p(\ell,i)+2^{L-\ell}.
\label{eq:bt_pair_indices}
\end{equation}
At level $\ell$, node $(\ell,i)$ splits the complex amplitude carried by coordinate $p(\ell,i)$ into two
descendants supported on indices
\[
p(\ell+1,2i-1)=p(\ell,i),\qquad p(\ell+1,2i)=q(\ell,i).
\]
Under our convention \eqref{eq:split}, the ``$\cos$'' branch propagates to the left child and the ``$\sin$''
branch (with fixed factor $j$) propagates to the right child.

\subsubsection{Layer matrices $\bm U_\ell$ and magnitude-tree unitary}

For each level $\ell\in\{1,\dots,L\}$ define the layer matrix
\begin{equation}
\bm U_\ell
\triangleq
\prod_{i=1}^{2^{\ell-1}}
\bm U_{\,p(\ell,i),\,q(\ell,i)}(\alpha_{\ell,i}),
\qquad \ell=1,\dots,L.
\label{eq:mag_layer_matrix}
\end{equation}
The index pairs $\big(p(\ell,i),q(\ell,i)\big)$ are disjoint across $i$ at fixed $\ell$, hence the factors in
\eqref{eq:mag_layer_matrix} commute and each $\bm U_\ell$ is unitary as a product of unitaries acting on disjoint
two-dimensional subspaces.

Define the magnitude-tree unitary as
\begin{equation}
\bm V_{\mathrm{mag}}(\bm\alpha)
\triangleq
\bm U_L \bm U_{L-1}\cdots \bm U_1.
\label{eq:Vmag_def}
\end{equation}
Since $\bm V_{\mathrm{mag}}(\bm\alpha)$ is a product of unitary matrices, it is unitary for all parameter values.

\subsubsection{Why every embedded $2\times2$ splitter sees one zero input}
\label{subsubsec:fanout_zero_input}

Define intermediate vectors
\begin{equation}
\bm v^{(1)} \triangleq \bm e_1,\qquad
\bm v^{(\ell+1)} \triangleq \bm U_\ell \bm v^{(\ell)},\ \ \ell=1,\dots,L,
\label{eq:mag_intermediate_vectors}
\end{equation}
so that $\bm v^{(L+1)}=\bm V_{\mathrm{mag}}(\bm\alpha)\bm e_1$.

\begin{lemma}[Zero-input property at each level]
\label{lem:zero_input_property}
For every level $\ell\in\{1,\dots,L\}$ and every node index $i\in\{1,\dots,2^{\ell-1}\}$,
\begin{equation}
\big[\bm v^{(\ell)}\big]_{q(\ell,i)} = 0.
\label{eq:zero_input_property}
\end{equation}
Consequently, each embedded $2\times2$ splitter inside $\bm U_\ell$ acts on a two-dimensional input of the form
$\big([\bm v^{(\ell)}]_{p(\ell,i)},\ 0\big)^T$ and therefore performs the fanout splitting in
Lemma~\ref{lem:splitting}.
\end{lemma}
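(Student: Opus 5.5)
We prove a slightly stronger support statement by induction on $\ell$. The claim is that $\bm v^{(\ell)}$ is supported on the set of entry indices of level $\ell$:
\[
\mathcal{P}_\ell\triangleq\{p(\ell,i): i=1,\dots,2^{\ell-1}\},
\]
so that $[\bm v^{(\ell)}]_k=0$ for every $k\notin\mathcal{P}_\ell$. The lemma then follows once we check that no split partner index $q(\ell,i)$ lies in $\mathcal{P}_\ell$.

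\emph{Partners avoid the support.} By \eqref{eq:bt_root_index}, the entries of $\mathcal{P}_\ell$ are the integers congruent to $1$ modulo $2^{L-\ell+1}$. By \eqref{eq:bt_pair_indices}, $q(\ell,i)=p(\ell,i)+2^{L-\ell}$. Its offset from $p(\ell,i)$ lies strictly between $0$ and the block length $2^{L-\ell+1}$. Moreover, $q(\ell,i)\in S(\ell,i)$, so it is congruent to $1+2^{L-\ell}\not\equiv 1$ modulo $2^{L-\ell+1}$. Hence $q(\ell,i)\notin\mathcal{P}_\ell$.

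\emph{Base case $\ell=1$.} Here $\bm v^{(1)}=\bm e_1$ and $\mathcal{P}_1=\{1\}=\{p(1,1)\}$, so the support claim holds.

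\emph{Inductive step.} Assume $\bm v^{(\ell)}$ is supported on $\mathcal{P}_\ell$. By the previous paragraph, $[\bm v^{(\ell)}]_{q(\ell,i)}=0$ for every $i$, which is \eqref{eq:zero_input_property} at level $\ell$. Since the pairs $(p(\ell,i),q(\ell,i))$ are disjoint, $\bm U_\ell$ acts blockwise. On each pair it sees the input $([\bm v^{(\ell)}]_{p(\ell,i)},0)^T$, and Lemma~\ref{lem:splitting} outputs
\[
\big(a\cos\alpha_{\ell,i},\ j a\sin\alpha_{\ell,i}\big), \qquad a=[\bm v^{(\ell)}]_{p(\ell,i)},
\]
placed on indices $p(\ell,i)$ and $q(\ell,i)$. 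Coordinates outside every pair are left unchanged, and by hypothesis they are zero. Thus $\bm v^{(\ell+1)}$ is supported on
\[
\bigcup_i\{p(\ell,i),q(\ell,i)\}=\bigcup_i\{p(\ell+1,2i-1),p(\ell+1,2i)\}=\mathcal{P}_{\ell+1},
\]
using the child identities stated before \eqref{eq:mag_layer_matrix}. These identities follow directly from \eqref{eq:bt_root_index}: $p(\ell+1,2i-1)=(2i-2)2^{L-\ell}+1=p(\ell,i)$ and $p(\ell+1,2i)=p(\ell,i)+2^{L-\ell}$. This completes the induction, and the ``consequently'' part of the lemma is exactly the blockwise application of Lemma~\ref{lem:splitting} carried out in this step.

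The only real obstacle is bookkeeping. We must make sure that every coordinate untouched by $\bm U_\ell$ is already zero, and that the index arithmetic, in particular the child-index identities at the last level $\ell=L$, is correct. Once the support claim is phrased as above, the rest is direct.
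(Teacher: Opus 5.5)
Your proof is correct and follows the paper's route: induction on $\ell$ using the index identities $p(\ell+1,2i-1)=p(\ell,i)$ and $p(\ell+1,2i)=q(\ell,i)$, with Lemma~\ref{lem:splitting} applied blockwise. Your strengthened hypothesis, that $\bm v^{(\ell)}$ is supported on $\mathcal{P}_\ell$, is what the paper's phrase ``no coordinate of the form $q(\ell+1,\cdot)$ has been created before applying $\bm U_{\ell+1}$'' relies on implicitly, and it is genuinely needed: $\bm U_\ell$ never touches the coordinates $q(\ell+1,\cdot)$, and the paper's literal hypothesis (zeros only at $q(\ell,\cdot)$) says nothing about them.
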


\begin{IEEEproof}
We prove by induction on $\ell$.

For $\ell=1$, we have $\bm v^{(1)}=\bm e_1$, hence $[\bm v^{(1)}]_n=0$ for all $n\neq 1$, in particular
$[\bm v^{(1)}]_{q(1,1)}=0$.

Assume \eqref{eq:zero_input_property} holds for some $\ell$. Consider $\bm v^{(\ell+1)}=\bm U_\ell \bm v^{(\ell)}$.
Each factor in $\bm U_\ell$ acts only on its own pair $\{p(\ell,i),q(\ell,i)\}$ and leaves all other coordinates
unchanged. By the induction hypothesis, the input on $q(\ell,i)$ is zero for every $i$, so the action of that
factor can only create nonzero output on $q(\ell,i)$ \emph{at level $\ell+1$}. By construction of the index sets,
each such $q(\ell,i)$ is an entry index at the next level, i.e., $q(\ell,i)=p(\ell+1,2i)$. Therefore, no
coordinate of the form $q(\ell+1,\cdot)$ has been created \emph{before} applying $\bm U_{\ell+1}$, and hence
$[\bm v^{(\ell+1)}]_{q(\ell+1,i)}=0$ for all $i$. This proves the claim for $\ell+1$.
\end{IEEEproof}

% ------------------------------------------------------------
\subsection{Exact split-then-phase state preparation in closed form}
\label{subsec:split_then_phase_exact}

We now choose the split parameters $\bm\alpha$ and the output phases $\bm\vartheta$ in closed form so that
\begin{equation}
\bm P(\bm\vartheta)\,\bm V_{\mathrm{mag}}(\bm\alpha)\,\bm e_1 \;=\; \bm c,
\label{eq:split_then_phase_goal}
\end{equation}
which (by \eqref{eq:overall_mapping} and \eqref{eq:single_port_excitation}) implies $\bm x=\sqrt{P}\bm c$ at the
antenna ports.

\subsubsection{Subtree norms (magnitude targets)}
\label{subsubsec:subtree_norms}

For each node $(\ell,i)$, define its subtree norm
\begin{equation}
r_{\ell,i}\triangleq \|\bm c_{S(\ell,i)}\|_2,\qquad \ell=1,\dots,L,\ \ i=1,\dots,2^{\ell-1}.
\label{eq:bt_r_def}
\end{equation}
Define leaf-level values at $\ell=L+1$ as
\begin{equation}
r_{L+1,n}\triangleq |c_n|,\qquad n=1,\dots,N.
\label{eq:bt_leaf_r_def}
\end{equation}
Then $r_{1,1}=1$ and the subtree norms satisfy the recursion
\begin{equation}
r_{\ell,i}^2 = r_{\ell+1,2i-1}^2 + r_{\ell+1,2i}^2,\qquad
\ell=1,\dots,L.
\label{eq:bt_energy_recursion}
\end{equation}

\subsubsection{Split programming rule (closed form)}
\label{subsubsec:split_programming}

Choose the split angles at each internal node as
\begin{equation}
\alpha_{\ell,i}\triangleq
\begin{cases}
0, & r_{\ell,i}=0,\\[2pt]
\operatorname{atan2}\!\big(r_{\ell+1,2i},\,r_{\ell+1,2i-1}\big)\in[0,\pi/2], & r_{\ell,i}>0,
\end{cases}
\label{eq:alpha_choice}
\end{equation}
where $\ell=1,\dots,L$ and  $i=1,\dots,2^{\ell-1}$.
With this choice, the fanout splitting at each node allocates power between its left and right subtrees in
proportion to the desired subtree norms.

\subsubsection{Deterministic magnitude-tree phase offsets}
\label{subsubsec:mag_phase_offsets}

Because \eqref{eq:split} introduces a fixed factor $j$ on the ``$\sin$'' (right) branch output, the phases
produced by $\bm V_{\mathrm{mag}}(\bm\alpha)\bm e_1$ are deterministic functions of the root-to-leaf paths.

\begin{definition}[Right-branch count $w_n$]
\label{def:wn}
Let $N=2^L$. For each leaf index $n\in\{1,\dots,N\}$, write
\begin{equation}
n-1=\sum_{k=0}^{L-1} b_k 2^k,\qquad b_k\in\{0,1\},
\end{equation}
and define
\begin{equation}
w_n \triangleq \sum_{k=0}^{L-1} b_k,
\label{eq:wn_def}
\end{equation}
i.e., $w_n$ is the number of ``right-child'' decisions along the root-to-leaf path to leaf $n$
(equivalently the Hamming weight of the $L$-bit representation of $n-1$).
\end{definition}

\begin{lemma}[Phase-offset decomposition of the magnitude-tree output]
\label{lem:mag_phase_decomp}
Let $\bm y\triangleq \bm V_{\mathrm{mag}}(\bm\alpha)\bm e_1$. Under the splitter convention in \eqref{eq:Ualpha},
the phase of $y_n$ satisfies
\begin{equation}
y_n = |y_n|\,e^{j\theta^{(\mathrm{mag})}_n},
\qquad
\theta^{(\mathrm{mag})}_n \equiv \frac{\pi}{2}w_n \ \ (\mathrm{mod}\ 2\pi),
\label{eq:theta_mag_ideal}
\end{equation}
and hence $\theta^{(\mathrm{mag})}_n$ is deterministic given the tree labeling. In practice, fixed
layout-dependent phase offsets can be absorbed by replacing \eqref{eq:theta_mag_ideal} with a calibrated
per-output vector $\{\theta^{(\mathrm{mag})}_n\}$.
\end{lemma}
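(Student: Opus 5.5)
The plan is to track the intermediate vectors $\bm v^{(\ell)}$ from \eqref{eq:mag_intermediate_vectors} and show by induction on $\ell$ that each one has an explicit product form. For a node $(\ell,i)$, let $b_1,\dots,b_{\ell-1}\in\{0,1\}$ be the left/right decisions on the path from the root to that node, with $0$ meaning left and $1$ meaning right. Because each right decision at level $k$ shifts the entry index by $2^{L-k}$ (see \eqref{eq:bt_pair_indices}), this gives $p(\ell,i)-1=\sum_{k=1}^{\ell-1} b_k 2^{L-k}$. The inductive claim is:
\begin{equation*}
\big[\bm v^{(\ell)}\big]_{p(\ell,i)} = \Big(\prod_{k=1}^{\ell-1} \rho_k\Big)\, j^{\sum_{k=1}^{\ell-1} b_k},
\end{equation*}
where $\rho_k=\cos\alpha_{k,\cdot}$ if $b_k=0$ and $\rho_k=\sin\alpha_{k,\cdot}$ if $b_k=1$, each evaluated at the ancestor node on the path. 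The claim also asserts that $\bm v^{(\ell)}$ vanishes on every coordinate that is not an entry index at level $\ell$.

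The base case $\ell=1$ is $\bm v^{(1)}=\bm e_1$, with the empty product equal to $1$ and $p(1,1)=1$. For the inductive step, I will use Lemma~\ref{lem:zero_input_property}: each embedded cell in $\bm U_\ell$ sees the input $([\bm v^{(\ell)}]_{p(\ell,i)},0)^T$. By Lemma~\ref{lem:splitting}, the cell therefore writes $a\cos\alpha_{\ell,i}$ to coordinate $p(\ell,i)=p(\ell+1,2i-1)$ and $j\,a\sin\alpha_{\ell,i}$ to coordinate $q(\ell,i)=p(\ell+1,2i)$. The left child thus appends $b_\ell=0$ with factor $\cos$, and the right child appends $b_\ell=1$ with factor $j\sin$. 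The cells act on disjoint index pairs and leave all other coordinates untouched, so the support claim also carries over to level $\ell+1$.

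At $\ell=L+1$, every leaf $n$ is an entry index, since each leaf is its own block. This gives
\begin{equation*}
y_n=\Big(\prod_{k=1}^{L}\rho_k\Big)\, j^{\sum_k b_k},
\qquad
n-1=\sum_{k=1}^{L} b_k 2^{L-k}.
\end{equation*}
Reindexing the bits as in Definition~\ref{def:wn} (level $k$ corresponds to bit position $L-k$) shows that $\sum_k b_k$ equals the Hamming weight of $n-1$, which is $w_n$. Since $\alpha\in[0,\pi/2]$, every $\rho_k$ is nonnegative, so the product is a nonnegative real number. It therefore equals $|y_n|$, and $y_n=|y_n|e^{j\pi w_n/2}$, which is \eqref{eq:theta_mag_ideal}.

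The main point needing care is the bookkeeping in the inductive step: showing that the coordinate receiving the $\sin$ output is exactly the entry index of the right child, and that the bit-order reversal between tree levels and binary digits does not affect the Hamming weight. A minor edge case is $y_n=0$. There the phase is undefined, and the identity holds trivially for any choice of $\theta_n^{(\mathrm{mag})}$, so I take $\frac{\pi}{2}w_n$ by convention. The final remark about calibrated offsets is a modeling statement: it says a fixed layout-dependent diagonal phase factor composes additively with $\frac{\pi}{2}w_n$. It needs no proof beyond noting that such factors are constant in $\bm\alpha$.
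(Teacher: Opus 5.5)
Your proposal is correct and follows the same route as the paper: invoke Lemma~\ref{lem:zero_input_property} so every cell operates in the fanout mode of Lemma~\ref{lem:splitting}, then collect one factor $j$ per right-branch decision along the root-to-leaf path, which gives $y_n=j^{w_n}|y_n|$. The paper leaves several details implicit that your version supplies: the explicit induction, the index bookkeeping $p(\ell,i)-1=\sum_{k=1}^{\ell-1}b_k2^{L-k}$ (whose bit-order reversal does not change the Hamming weight), and the observation that $\cos\alpha,\sin\alpha\ge 0$ on $[0,\pi/2]$, so the real path gains add no extra phase.
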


\begin{IEEEproof}
By Lemma~\ref{lem:zero_input_property}, each embedded splitter is driven in fanout mode and the right output
introduces a fixed multiplicative factor $j$ relative to its input (Lemma~\ref{lem:splitting}). Along the unique
root-to-leaf path to leaf $n$, each right-branch decision contributes one factor $j$, and left branches contribute
no additional phase. Hence $y_n = j^{w_n}|y_n|$, which implies
$\theta^{(\mathrm{mag})}_n \equiv (\pi/2)w_n \ (\mathrm{mod}\ 2\pi)$.
\end{IEEEproof}

\subsubsection{Closed-form output phase-bank programming and exactness}
\label{subsubsec:phase_bank_programming}

\begin{theorem}[Exact split-then-phase state preparation]
\label{thm:exact_state_preparation}
Let $N=2^L$ and let $\bm c\in\mathbb{C}^N$ satisfy $\|\bm c\|_2=1$. Define leaf magnitudes
$r_{L+1,n}\triangleq |c_n|$ and compute subtree norms $r_{\ell,i}$ by the recursion
$r_{\ell,i}^2=r_{\ell+1,2i-1}^2+r_{\ell+1,2i}^2$ (cf.~\eqref{eq:bt_energy_recursion}). Choose split angles
$\{\alpha_{\ell,i}\}$ according to \eqref{eq:alpha_choice}. Let $\theta^{(\mathrm{mag})}_n$ be given by the ideal
closed form \eqref{eq:theta_mag_ideal} (or by a calibrated per-output offset vector), and choose the output
phase-bank settings
\begin{equation}
\vartheta_n \triangleq
\begin{cases}
0, & c_n=0\ \text{(arbitrary)},\\[2pt]
\angle c_n - \theta^{(\mathrm{mag})}_n, & c_n\neq 0,
\end{cases}
\label{eq:phasebank_choice}
\end{equation}
where $n=1,\dots,N$, and all angles are interpreted modulo $2\pi$. Then the split-then-phase network satisfies
\begin{equation}
\bm P(\bm\vartheta)\,\bm V_{\mathrm{mag}}(\bm\alpha)\,\bm e_1 \;=\; \bm c.
\end{equation}
\end{theorem}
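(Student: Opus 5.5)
The plan is to establish separately that the magnitudes of $\bm y\triangleq\bm V_{\mathrm{mag}}(\bm\alpha)\bm e_1$ equal $|c_n|$, and that the phases are then corrected by the diagonal bank. By Lemma~\ref{lem:zero_input_property}, every embedded splitter operates in fanout mode. Lemma~\ref{lem:splitting} therefore applies at each node. If node $(\ell,i)$ carries amplitude $a_{\ell,i}$ on coordinate $p(\ell,i)$, it delivers $a_{\ell,i}\cos\alpha_{\ell,i}$ to its left child, on coordinate $p(\ell+1,2i-1)=p(\ell,i)$, and $j\,a_{\ell,i}\sin\alpha_{\ell,i}$ to its right child, on coordinate $p(\ell+1,2i)=q(\ell,i)$. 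The final layer $\ell=L$ writes directly to leaf coordinates $2i-1$ and $2i$.

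The core step is an induction on $\ell$ with the hypothesis $|a_{\ell,i}|=r_{\ell,i}$ for all $i$. We extend this notation to $\ell=L+1$, where $a_{L+1,n}=y_n$. The base case is $a_{1,1}=1=r_{1,1}$, since $\|\bm c\|_2=1$.

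For the inductive step we split into two cases on $r_{\ell,i}$.
\begin{itemize}
\item If $r_{\ell,i}>0$, then \eqref{eq:alpha_choice} gives $\cos\alpha_{\ell,i}=r_{\ell+1,2i-1}/r_{\ell,i}$ and $\sin\alpha_{\ell,i}=r_{\ell+1,2i}/r_{\ell,i}$. This uses the recursion \eqref{eq:bt_energy_recursion} to identify $\sqrt{r_{\ell+1,2i-1}^2+r_{\ell+1,2i}^2}$ with $r_{\ell,i}$, together with the fact that atan2 of nonnegative arguments lies in $[0,\pi/2]$. Multiplying by $|a_{\ell,i}|=r_{\ell,i}$ yields $|a_{\ell+1,2i-1}|=r_{\ell+1,2i-1}$ and $|a_{\ell+1,2i}|=r_{\ell+1,2i}$.
\item If $r_{\ell,i}=0$, then $a_{\ell,i}=0$, so both children carry zero. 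The recursion forces both child norms to be zero as well, so the claim still holds.
\end{itemize}
At $\ell=L+1$ this gives $|y_n|=|c_n|$ for all $n$.

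For the phases, Lemma~\ref{lem:mag_phase_decomp} gives $y_n=|y_n|e^{j\theta^{(\mathrm{mag})}_n}=|c_n|e^{j\theta^{(\mathrm{mag})}_n}$. Alternatively, $\theta^{(\mathrm{mag})}_n$ may be the calibrated offset, which by definition is the actual phase of $y_n$. Since $\bm P(\bm\vartheta)$ is diagonal, $[\bm P(\bm\vartheta)\bm y]_n=e^{j\vartheta_n}y_n$.
\begin{itemize}
\item If $c_n\neq0$, then \eqref{eq:phasebank_choice} gives $e^{j\vartheta_n}y_n=|c_n|e^{j\angle c_n}=c_n$, with angles taken modulo $2\pi$ and harmless because of the $2\pi$-periodicity of $e^{j\cdot}$.
\item If $c_n=0$, then $y_n=0$ and the output is $0=c_n$ for any $\vartheta_n$.
\end{itemize}
Stacking the coordinates gives $\bm P(\bm\vartheta)\bm V_{\mathrm{mag}}(\bm\alpha)\bm e_1=\bm c$.

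The main obstacle is bookkeeping rather than analysis. We must make sure that the amplitude at node $(\ell,i)$ really sits on coordinate $p(\ell,i)$ at the input of layer $\ell$, and that no other layer disturbs it. This relies on the index identities $p(\ell+1,2i-1)=p(\ell,i)$ and $p(\ell+1,2i)=q(\ell,i)$, and on the disjointness of the pairs within $\bm U_\ell$. The degenerate case $r_{\ell,i}=0$ also needs care, so that the convention $\alpha_{\ell,i}=0$ causes no inconsistency. I would state the coordinate-tracking explicitly as part of the induction hypothesis: after layers $1,\dots,\ell-1$, $\bm v^{(\ell)}$ is supported on $\{p(\ell,i)\}_i$ with entries $a_{\ell,i}$.
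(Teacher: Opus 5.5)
Your proposal is correct and takes the same route as the paper's proof: fanout-mode operation at every node (Lemma~\ref{lem:zero_input_property} with Lemma~\ref{lem:splitting}) gives $|y_n|=|c_n|$, and Lemma~\ref{lem:mag_phase_decomp} plus the diagonal phase bank fixes the phases, with $c_n=0$ handled separately. Your explicit induction on $|a_{\ell,i}|=r_{\ell,i}$, with the support tracked on the entry indices $p(\ell,i)$ and the degenerate case $r_{\ell,i}=0$ treated on its own, is just a more careful write-up of the paper's one-line ``by recursion down the tree'' step.
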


\begin{IEEEproof}
Let $\bm y=\bm V_{\mathrm{mag}}(\bm\alpha)\bm e_1$.

\emph{Magnitudes:}
By Lemma~\ref{lem:zero_input_property}, every embedded splitter is driven in fanout mode with one zero input, so
Lemma~\ref{lem:splitting} applies at every internal node. With \eqref{eq:alpha_choice}, each node allocates power
between its left and right children in proportion to the desired subtree energies
$r_{\ell+1,2i-1}^2$ and $r_{\ell+1,2i}^2$. By recursion down the tree, the induced leaf magnitudes satisfy
$|y_n|=|c_n|$ for all $n$.

\emph{Phases:}
By Lemma~\ref{lem:mag_phase_decomp}, $y_n=|y_n|e^{j\theta^{(\mathrm{mag})}_n}$. Multiplying by the diagonal phase
bank yields
\[
[\bm P(\bm\vartheta)\bm y]_n
=
|c_n|e^{j(\theta^{(\mathrm{mag})}_n+\vartheta_n)}
=
|c_n|e^{j\angle c_n}
=
c_n
\]
for all $n$ with $c_n\neq 0$. When $c_n=0$, the output is zero regardless of $\vartheta_n$. Therefore,
$\bm P(\bm\vartheta)\bm V_{\mathrm{mag}}(\bm\alpha)\bm e_1=\bm c$.
\end{IEEEproof}

\begin{corollary}
\label{cor:exact_tx_vector}
Let $N=2^L$ and let $\bm x\in\mathbb{C}^N$ satisfy $\|\bm x\|_2^2=P$ and define $\bm c=\bm x/\sqrt{P}$.
Compute $\bm\alpha$ and $\bm\vartheta$ using \eqref{eq:alpha_choice} and \eqref{eq:phasebank_choice}.
Then, for the single-port excitation $\bm x_{\mathrm{in}}=\sqrt{P}\bm e_1$ in \eqref{eq:single_port_excitation},
the fully-analog transmitter \eqref{eq:overall_mapping} produces
\[
\bm V(\bm\varphi)\bm x_{\mathrm{in}}
=
\bm P(\bm\vartheta)\bm V_{\mathrm{mag}}(\bm\alpha)\,\sqrt{P}\bm e_1
=
\sqrt{P}\,\bm c
=
\bm x.
\]
\end{corollary}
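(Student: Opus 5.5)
The plan is a direct reduction to Theorem~\ref{thm:exact_state_preparation}, using linearity of the network map. First, since $\|\bm x\|_2^2=P>0$, the normalized vector $\bm c=\bm x/\sqrt{P}$ is well defined and satisfies $\|\bm c\|_2=1$. It is therefore an admissible input to Theorem~\ref{thm:exact_state_preparation}. We compute the subtree norms $r_{\ell,i}$ from $|c_n|$ via \eqref{eq:bt_energy_recursion}. We then set $\bm\alpha$ by \eqref{eq:alpha_choice} and $\bm\vartheta$ by \eqref{eq:phasebank_choice}, and take $\bm\varphi=(\bm\alpha,\bm\vartheta)$. With these choices the network is exactly the split-then-phase factorization \eqref{eq:split_then_phase_factorization}, so $\bm V(\bm\varphi)=\bm P(\bm\vartheta)\bm V_{\mathrm{mag}}(\bm\alpha)$.

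Next, we substitute the single-port excitation \eqref{eq:single_port_excitation} into \eqref{eq:overall_mapping}. Because $\bm V(\bm\varphi)$ is a fixed matrix once $\bm\varphi$ is chosen, the scalar $\sqrt{P}$ can be pulled out:
\[
\bm V(\bm\varphi)\bm x_{\mathrm{in}}=\sqrt{P}\,\bm P(\bm\vartheta)\bm V_{\mathrm{mag}}(\bm\alpha)\bm e_1 .
\]
Theorem~\ref{thm:exact_state_preparation} gives $\bm P(\bm\vartheta)\bm V_{\mathrm{mag}}(\bm\alpha)\bm e_1=\bm c$, so the output equals $\sqrt{P}\,\bm c=\bm x$.

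There is essentially no obstacle here. The only point needing a remark is that the programming rule depends on $\bm x$ only through $\bm c$. The parameters $\bm\alpha$ and $\bm\vartheta$ do not depend on the scale $\sqrt{P}$, and applying the rule to $\bm c$ rather than to $\bm x$ is consistent because the angles $\angle x_n=\angle c_n$ and the ratios of the magnitudes are scale invariant. As a consistency check, the output power equals $P$ by \eqref{eq:power_preservation}, which agrees with $\|\bm x\|_2^2=P$.
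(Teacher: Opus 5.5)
Your proposal is correct and matches the paper's proof: apply Theorem~\ref{thm:exact_state_preparation} to the unit-norm $\bm c=\bm x/\sqrt{P}$, then multiply by $\sqrt{P}$, using linearity and $\bm x_{\mathrm{in}}=\sqrt{P}\bm e_1$. Your extra remarks (well-definedness for $P>0$, scale invariance of the programming rule, and the power-preservation consistency check) are harmless but not needed.
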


\begin{IEEEproof}
By \Cref{thm:exact_state_preparation}, $\bm P(\bm\vartheta)\bm V_{\mathrm{mag}}(\bm\alpha)\bm e_1=\bm c$. Multiply
both sides by $\sqrt{P}$ and use $\bm x_{\mathrm{in}}=\sqrt{P}\bm e_1$.
\end{IEEEproof}

% ------------------------------------------------------------
\subsection{Closed-form programming algorithm}
\label{subsec:programming_algorithm}

For completeness, Algorithm~\ref{alg:closed_form_phase_programming} summarizes the closed-form routine implied
by \Cref{thm:exact_state_preparation}. It is intended to be executed once per desired symbol-wise target vector.

\begin{algorithm}[t]
\caption{Closed-form programming for split-then-phase state preparation (binary tree, $N=2^L$)}
\label{alg:closed_form_phase_programming}
\normalsize
\begin{algorithmic}[1]
\Require Target $\bm x\in\mathbb{C}^N$ with $P=\|\bm x\|_2^2>0$ and $N=2^L$
\Ensure Split angles $\{\alpha_{\ell,i}\}$ and output phases $\bm\vartheta=[\vartheta_1,\dots,\vartheta_N]^T$

\State $\bm c \gets \bm x/\sqrt{P}$; \quad $L \gets \log_2 N$ \Comment{$\|\bm c\|_2=1$}
\State Leaf values ($\ell=L+1$): for all $n$, set $r_{L+1,n}\gets |c_n|$ and $\phi_n\gets \angle c_n$
\Comment{$\angle 0 \triangleq 0$}

\For{$\ell \gets L$ down to $1$} \Comment{Bottom-up magnitudes}
  \For{$i \gets 1$ to $2^{\ell-1}$}
    \State $r_{\ell,i}\gets \sqrt{r_{\ell+1,2i-1}^2+r_{\ell+1,2i}^2}$
    \State $\alpha_{\ell,i}\gets \operatorname{atan2}\!\big(r_{\ell+1,2i},\,r_{\ell+1,2i-1}\big)$; \quad (if $r_{\ell,i}=0$ set $\alpha_{\ell,i}\gets 0$)
  \EndFor
\EndFor

\State Compute known magnitude-tree phase offsets $\theta^{(\mathrm{mag})}_n$:
\Statex \quad Ideal convention: $\theta^{(\mathrm{mag})}_n \gets \frac{\pi}{2}w_n$ where $w_n$ is the Hamming weight of $n-1$ (equivalently the number of right-branch decisions).

\For{$n \gets 1$ to $N$}
  \State $\vartheta_n \gets 0$ if $c_n=0$, else $\vartheta_n \gets \phi_n - \theta^{(\mathrm{mag})}_n$
\EndFor

\State Program $\bm V_{\mathrm{mag}}(\bm\alpha)$ and $\bm P(\bm\vartheta)$.

\Statex \textbf{Optional:} If the injected tone has common phase offset $e^{j\theta_s}$, replace
$\vartheta_n \leftarrow \vartheta_n-\theta_s$ for all $n$.
\end{algorithmic}
\end{algorithm}

\begin{remark}[Degrees of freedom]
The split-then-phase architecture uses $N-1$ independent split controls $\{\alpha_{\ell,i}\}$ (one per internal
node) and $N$ output phase controls $\{\vartheta_n\}$, totaling $2N-1$ real tunable degrees of freedom
(parameter-minimal for universal state preparation).
\end{remark}

 % ------------------------------------------------------------
\subsection{General $N$ by zero-padding and optional pruning}
\label{subsec:general_N_padding}

The balanced-tree construction above is written for $N=2^L$ only to keep the indexing compact. For a general
number of antennas, define
\[
N' \triangleq 2^{\lceil \log_2 N\rceil}
\]
and form the padded target vector $\bm c'\in\mathbb{C}^{N'}$ as
\[
c'_n =
\begin{cases}
c_n, & 1\le n\le N,\\
0, & N<n\le N'.
\end{cases}
\]
Since $\|\bm c'\|_2=\|\bm c\|_2=1$, Theorem~\ref{thm:exact_state_preparation} applies directly to the padded
vector $\bm c'$. Therefore, Algorithm~\ref{alg:closed_form_phase_programming} can be applied with $N$ replaced
by $N'$. The first $N$ leaves realize the desired physical antenna excitations, while the padded leaves have
zero target amplitude. In an unpruned implementation, these padded outputs may simply be terminated.
Equivalently, all-zero padded subtrees may be removed, and any node with only one active child may be bypassed
or programmed with $\alpha=0$ or $\alpha=\pi/2$ to route all power to the active child. Hence, the general-$N$
case is handled without changing the closed-form programming rule; the balanced padded implementation uses
$N'-1<2N$ internal splitter cells before optional pruning.

% ------------------------------------------------------------
\subsection{Interpretation: magnitude splitting followed by per-antenna phase assignment}
\label{subsec:interpretation}

The construction admits an intuitive interpretation. Starting from the fixed input $\bm e_1$, each internal tree
node performs a lossless fanout split \eqref{eq:split} that allocates power between its left and right subtrees
according to \eqref{eq:alpha_choice}. After $L$ layers, the magnitude tree produces per-antenna magnitudes
$|c_1|,\dots,|c_N|$ while introducing deterministic phase offsets due to the fixed branch convention. The output
phase bank then compensates these offsets and assigns the desired phases $\angle c_n$, yielding exact synthesis
of $\bm c$ in the ideal lossless model. In \Cref{sec:hardware} we map this construction to an interferometric RF
network built from $2\times2$ fanout cells (e.g., MZI/hybrid-based) followed by per-antenna phase shifters.

% ------------------------------------------------------------
\subsection{Brief calibrated extension for non-unitary magnitude trees}
\label{subsec:calibrated_nonunitary_extension}

The closed-form rule above is exact for the ideal unitary splitter model. In a calibrated non-ideal
implementation, the local fanout response of an internal node $\nu$ can instead be represented as
\[
a \mapsto
\begin{bmatrix}
h_{\nu,\mathrm{L}}(u_\nu)a\\
h_{\nu,\mathrm{R}}(u_\nu)a
\end{bmatrix},
\]
where $u_\nu$ is the hardware control code and
$h_{\nu,\mathrm{L}}(u_\nu)$ and $h_{\nu,\mathrm{R}}(u_\nu)$ are calibrated complex branch gains that include
insertion loss, amplitude imbalance, and static phase offsets. The ideal rule
$\alpha_\nu=\operatorname{atan2}(r_{\nu,\mathrm{R}},r_{\nu,\mathrm{L}})$ can then be replaced by a LUT-based
branch-ratio selection. For nonzero requested child norms, one possible calibrated rule is
\[
u_\nu^\star
=
\arg\min_{u_\nu}
\left|
\frac{|h_{\nu,\mathrm{R}}(u_\nu)|^2G_{\nu,\mathrm{R}}}
     {|h_{\nu,\mathrm{L}}(u_\nu)|^2G_{\nu,\mathrm{L}}}
-
\frac{r_{\nu,\mathrm{R}}^2}{r_{\nu,\mathrm{L}}^2}
\right|,
\]
where $G_{\nu,\mathrm{L}}$ and $G_{\nu,\mathrm{R}}$ denote calibrated downstream power-transmission factors of
the left and right subtrees. If one requested child norm is zero, the limiting choice routes all available power
to the nonzero child. The accumulated calibrated branch phases are then absorbed into the output phase bank.
In the ideal case, $h_{\nu,\mathrm{L}}=\cos\alpha_\nu$, $h_{\nu,\mathrm{R}}=j\sin\alpha_\nu$, and
$G_{\nu,\mathrm{L}}=G_{\nu,\mathrm{R}}=1$, so the calibrated ratio rule reduces to
\eqref{eq:alpha_choice}. The realized direction error and delivered-power normalization are then evaluated using
the passive contractive model in Section~\ref{sec:nonideal_framework}.

% ============================================================
% Section 4: Idealized Analog Hardware Realization
% ============================================================
\section{Idealized analog hardware realization}
\label{sec:hardware}

This section maps the split-then-phase synthesis model of \Cref{sec:exact_synthesis} to an idealized
RF/microwave implementation built from transmission lines (or waveguides), fixed $3$\,dB couplers, and tunable
phase shifters. We view the programmable network as an $N$-input/$N$-output transmission network between an input reference plane
and an output reference plane, with output port $n$ feeding antenna element $n$.

At the input reference plane, only port~$1$ is driven by a coherent tone, while the remaining input ports are
matched so that their externally incident waves are zero. Under the ideal lossless and perfectly matched
assumptions adopted throughout this section, the network transfer is unitary and provides no RF gain.

Our objective is to realize in hardware the programmable unitary
\begin{equation}
\bm V(\bm\varphi)\;\triangleq\;\bm P(\bm\vartheta)\,\bm V_{\mathrm{mag}}(\bm\alpha),
\qquad
\bm\varphi \triangleq (\bm\alpha,\bm\vartheta),
\label{eq:V_split_then_phase_hw}
\end{equation}
so that, under single-port excitation $\bm x_{\mathrm{in}}=\sqrt{P}\bm e_1$,
\begin{equation}
\bm x
\;=\;
\bm V(\bm\varphi)\,\bm x_{\mathrm{in}}
\;=\;
\sqrt{P}\,\bm P(\bm\vartheta)\,\bm V_{\mathrm{mag}}(\bm\alpha)\,\bm e_1
\;=\;
\sqrt{P}\,\bm c,
\label{eq:hw_objective_mapping}
\end{equation}
where $\bm c$ is the desired unit-norm excitation direction (Theorem~\ref{thm:exact_state_preparation}).

% ------------------------------------------------------------
\subsection{Balanced magnitude-tree topology and correspondence to the matrix product}
\label{subsec:bt_topology_hw}

In \Cref{sec:exact_synthesis}, the magnitude-allocation network is constructed as the layered product
\begin{equation}
\bm V_{\mathrm{mag}}(\bm\alpha)
=
\bm U_L \bm U_{L-1}\cdots \bm U_1,
\qquad N=2^L,
\end{equation}
where each layer matrix $\bm U_\ell$ (\Cref{eq:mag_layer_matrix}) embeds $2^{\ell-1}$ disjoint tunable $2\times2$
splitters $\bm U(\alpha)$ (\Cref{eq:Ualpha}) into an $N\times N$ unitary.

A direct circuit interpretation is therefore a \emph{balanced binary tree} of $N-1$ tunable $2\times2$
interferometric splitter cells (one cell per internal tree node) that allocates the required magnitudes, followed
by a bank of $N$ per-antenna phase shifters implementing the diagonal matrix $\bm P(\bm\vartheta)$ in
\eqref{eq:V_split_then_phase_hw}. The tree has $L=\log_2 N$ layers of splitter cells:
layer $\ell\in\{1,\dots,L\}$ contains $2^{\ell-1}$ cells indexed by $i\in\{1,\dots,2^{\ell-1}\}$, matching the
indexing used in \Cref{sec:exact_synthesis}.

% ------------------------------------------------------------
\subsection{Interferometric realization of the $2\times 2$ splitter $\bm U(\alpha)$ with one input matched}
\label{subsec:2x2_cell}

Each internal node in the magnitude tree implements the $2\times2$ splitter $\bm U(\alpha)$ in \Cref{eq:Ualpha}.
A convenient realization is a Mach--Zehnder interferometer (MZI) formed by two fixed $3$\,dB couplers and one
tunable differential phase.

Let $\bm H$ denote the ideal $3$\,dB coupler model
\begin{equation}
\bm H \triangleq \frac{1}{\sqrt{2}}
\begin{bmatrix}
1 & 1\\
1 & -1
\end{bmatrix}.
\label{eq:hybrid_H}
\end{equation}
An MZI with internal differential phase $\delta$ has transfer matrix
\begin{equation}
\bm U_{\mathrm{MZI}}(\delta)
\triangleq
\bm H\,
\mathrm{diag}\!\left(e^{j\delta/2},\,e^{-j\delta/2}\right)
\bm H.
\label{eq:mzi_matrix}
\end{equation}
Using \eqref{eq:hybrid_H}, this simplifies to
\begin{equation}
\bm U_{\mathrm{MZI}}(\delta)
=
\begin{bmatrix}
\cos(\delta/2) & j\sin(\delta/2)\\
j\sin(\delta/2) & \cos(\delta/2)
\end{bmatrix}.
\label{eq:mzi_matrix_simplified}
\end{equation}

Comparing \eqref{eq:mzi_matrix_simplified} with \Cref{eq:Ualpha} shows that
\begin{equation}
\bm U(\alpha)=\bm U_{\mathrm{MZI}}(2\alpha),
\qquad \alpha\in[0,\pi/2],
\label{eq:Ualpha_as_MZI}
\end{equation}
so the split ratio is controlled by a single tunable differential phase $\delta=2\alpha$.

\begin{figure}[t]
\centering
\resizebox{\columnwidth}{!}{%
\begin{tikzpicture}[
    >={Stealth[length=2mm]},
    line width=0.8pt,
    font=\normalsize,
    diff/.style={draw, minimum width=22mm, minimum height=16mm, align=center, inner sep=2pt, fill=white},
    coup/.style={draw, minimum width=8mm, minimum height=14mm, align=center, fill=white},
    term/.style={draw, minimum width=8mm, minimum height=6mm, align=center, fill=white},
    node distance=1.0cm and 1.5cm,
]

\def\vsep{1.2}
\pgfmathsetmacro{\ymid}{0.5*\vsep}

\coordinate (in_top) at (0, \vsep);
\coordinate (in_bot) at (0, 0);

\node[anchor=east] at (in_top) {Input};

\node[term, anchor=east] (Z0) at (in_bot) {$Z_0$};

\node[coup] (c1) at (2.2, \ymid) {\scriptsize 3\,dB};
\node[coup] (c2) at (7.2, \ymid) {\scriptsize 3\,dB};

\node[diff] (dph) at (4.7, \ymid) {diff.\ phase\\$\pm \delta/2$};

\coordinate (out_top) at (9.4, \vsep);
\coordinate (out_bot) at (9.4, 0);

\node[anchor=west] at (out_top) {Output 1};
\node[anchor=west] at (out_bot) {Output 2};

\draw (in_top) -- (c1.west |- in_top);
\draw (Z0.east) -- (c1.west |- in_bot);

\draw (c1.east |- in_top) -- (dph.west |- in_top);
\draw (c1.east |- in_bot) -- (dph.west |- in_bot);

\draw (dph.east |- in_top) -- (c2.west |- in_top);
\draw (dph.east |- in_bot) -- (c2.west |- in_bot);

\draw (c2.east |- in_top) -- (out_top);
\draw (c2.east |- in_bot) -- (out_bot);

\end{tikzpicture}%
}
\caption{Fanout-mode interferometric splitter implementing $\bm U(\alpha)=\bm U_{\mathrm{MZI}}(2\alpha)$ with one
input matched and unexcited (incident wave $0$). Two fixed $3$\,dB couplers form an MZI whose differential
control $\delta=2\alpha$ sets the power split. The fixed $\pm 90^\circ$ branch convention of the MZI
(\Cref{eq:mzi_matrix}) is compensated by the per-antenna output phase bank $\bm P(\bm\vartheta)$.}
\label{fig:splitter_cell_matched_load}
\end{figure}
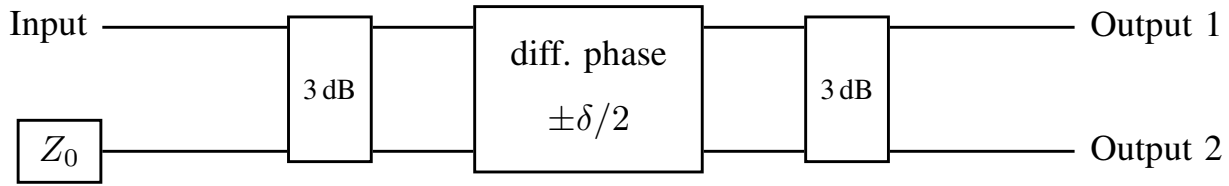

In the state-preparation operating mode, Lemma~\ref{lem:zero_input_property} implies that the second input to
each embedded $2\times2$ splitter is identically zero (no incident wave). In hardware, we enforce this by
terminating the corresponding idle input port in a matched load ($Z_0$), so reflections are suppressed and the
cell operates in the fanout regime of Lemma~\ref{lem:splitting}. Consequently, each internal node requires only
a \emph{single} tunable control $\delta_{\ell,i}=2\alpha_{\ell,i}$ to set the split ratio.

\begin{remark}
In the ideal model the terminated idle inputs carry (exactly) zero power in the fanout operating regime, so these
terminations do not dissipate signal power; under non-ideal isolation/coupler imbalance, leakage into the
terminations contributes an additional loss term that is absorbed into $g(\bm\eta)$ in
Section~\ref{sec:nonideal_framework}.
\end{remark}

% ------------------------------------------------------------
\subsection{Layered realization, physical pairing, and output phase bank}
\label{subsec:bt_layered_realization}

A convenient physical realization is to implement the product $\bm U_L\cdots \bm U_1$ as $L$ successive layers
of disjoint two-line splitter cells, where the pairing at layer $\ell$ follows the index map
$\big(p(\ell,i),q(\ell,i)\big)$ in \Cref{eq:bt_root_index,eq:bt_pair_indices}. Because the pairs are disjoint
within each layer, the corresponding cells can be placed in parallel at that layer (subject to routing
constraints), which directly realizes the commuting product structure in \Cref{eq:mag_layer_matrix}.

Under single-port excitation and ideal matching, the network operates in the fanout mode of
Lemma~\ref{lem:splitting} at every internal node, with split parameters $\{\alpha_{\ell,i}\}$ computed by the
closed-form rule in \Cref{sec:exact_synthesis} (Algorithm~\ref{alg:closed_form_phase_programming}). After the
magnitude tree, a bank of $N$ per-antenna phase shifters implements $\bm P(\bm\vartheta)$, with $\vartheta_n$
computed in closed form (Theorem~\ref{thm:exact_state_preparation}).

Fig.~\ref{fig:full_layout_tree_block} illustrates the proposed fully-analog transmitter architecture. The signal
flow from a single coherent RF tone to the desired antenna excitations $x_1,\ldots,x_N$ is as follows. First, note that  each block
$\mathrm{Cell}_{\ell,i}$ in the figure denotes one fanout-mode interferometric splitter of
Fig.~\ref{fig:splitter_cell_matched_load}. Thus, the label $\delta_{\ell,i}=2\alpha_{\ell,i}$ is the MZI
differential phase that sets the power split at node $(\ell,i)$. Next, 
the single RF tone first enters the root cell $\mathrm{Cell}_{1,1}$, where the differential phase
$\delta_{1,1}=2\alpha_{1,1}$ splits the incident power between the two child subtrees. More generally, at layer
$\ell$, the cell $\mathrm{Cell}_{\ell,i}$ receives the wave assigned to subtree $S(\ell,i)$ and splits it between
its left and right child subtrees according to the corresponding subtree norms of the normalized target vector
$\bm c=\bm x/\sqrt{P}$ used in Algorithm~\ref{alg:closed_form_phase_programming}. The second input of each
$2\times2$ cell is terminated in $Z_0$, which realizes the zero-incident-wave condition required for the fanout
mode in Lemma~\ref{lem:splitting}.
Finally, after $L$ splitting layers, the normalized leaf magnitudes are $|c_n|$ and, with the input tone power $P$, the
corresponding antenna-port magnitudes are $|x_n|=\sqrt{P}|c_n|$. The output phase bank $e^{j\vartheta_n}$ then
compensates the deterministic branch phases introduced by the splitter tree and assigns the final antenna phases,
thereby producing the desired antenna excitations $x_1,\ldots,x_N$.

\begin{figure*} 
\centering
\resizebox{\linewidth}{!}{%
\begin{tikzpicture}[
    % --- Styling ---
    >={Stealth[length=2mm]},
    font=\normalsize,
    % Component Styles
    port/.style={circle, draw, fill=white, inner sep=1.5pt, line width=0.8pt},
    src/.style={draw, rounded corners, align=center, fill=white, font=\normalsize, inner sep=6pt, line width=0.8pt},
    term/.style={draw, align=center, minimum height=6mm, minimum width=8mm, fill=white, font=\footnotesize},
    ph/.style={draw, align=center, minimum height=8mm, minimum width=14mm, fill=white},
    cellbox/.style={draw, rounded corners, fill=gray!5, align=center, minimum width=3.6cm, minimum height=1.4cm, line width=0.8pt},
    % Line Styles
    wire/.style={thick, ->, rounded corners=5pt},
    dotwire/.style={thick, densely dotted, ->, rounded corners=5pt}
]

% --- Layout Configuration ---
\def\levelDist{4.0}
\def\sibDist{9.0}
\def\leafOffset{2.5}
\def\srcX{0.9}

% --- Smart Cell Macro ---
% #1: Name, #2: X, #3: Y, #4: Label Content
\newcommand{\SmartCell}[4]{
    \node[cellbox] (#1) at (#2, #3) {#4};

    \coordinate (#1_inL)  at ([xshift=0.9cm]#1.north west);
    \coordinate (#1_inR)  at ([xshift=-0.9cm]#1.north east);
    \coordinate (#1_outL) at ([xshift=0.9cm]#1.south west);
    \coordinate (#1_outR) at ([xshift=-0.9cm]#1.south east);

    \node[port] at (#1_inL) {};
    \node[port] at (#1_inR) {};
    \node[port] at (#1_outL) {};
    \node[port] at (#1_outR) {};

    % Matched termination on the idle (left) input
    \node[term] (#1_Z0) at ([xshift=-1.1cm, yshift=0.5cm]#1.north west) {$Z_0$};
    \draw[thick, rounded corners=3pt] (#1_Z0.east) -| (#1_inL.north);
}

% ==========================================
% 1. Place representative cells (top, one intermediate layer, bottom)
% ==========================================

% Root (level 1)
\SmartCell{cRoot}{0}{0}{
    \textbf{Cell}$_{1,1}$\\[3pt]
    $\delta_{1,1}=2\alpha_{1,1}$
}

% Level 2 (two children of root)
\SmartCell{c21}{-\sibDist/2}{-\levelDist}{
    \textbf{Cell}$_{2,1}$\\[3pt]
    $\delta_{2,1}=2\alpha_{2,1}$
}
\SmartCell{c22}{\sibDist/2}{-\levelDist}{
    \textbf{Cell}$_{2,2}$\\[3pt]
    $\delta_{2,2}=2\alpha_{2,2}$
}

% Bottom (representative last layer, level L)
\SmartCell{cL1}{-\sibDist/2 - \leafOffset}{-3.0*\levelDist}{
    \textbf{Cell}$_{L,1}$\\[3pt]
    $\delta_{L,1}=2\alpha_{L,1}$
}
\SmartCell{cLR}{\sibDist/2 + \leafOffset}{-3.0*\levelDist}{
    \textbf{Cell}$_{L,2^{L-1}}$\\[3pt]
    $\delta_{L,2^{L-1}}=2\alpha_{L,2^{L-1}}$
}

% ==========================================
% 2. Source
% ==========================================

\node[port] (in_port) at (\srcX, 3.2) {};
\node[src] (source) at (\srcX, 4.4) {Single RF tone\\$s(t)=\sqrt{2}\Re\{\sqrt{P}e^{j2\pi f_ct}\}$};

\draw[->, thick] (source.south) -- node[right, font=\footnotesize] {in 1} (in_port.north);

% Feed the root cell (right input)
\draw[wire] (in_port.south) -- (cRoot_inR);

% ==========================================
% 3. Tree connections (representative)
% ==========================================

% Root -> Level 2
\draw[wire] (cRoot_outL) -- ++(0,-0.2) -| (c21_inR);
\draw[wire] (cRoot_outR) -- ++(0,-0.2) -| (c22_inR);

% Dotted continuation from level 2
\draw[dotwire] (c21_outL) -- ++(0,-1.2);
\draw[dotwire] (c21_outR) -- ++(0,-1.2);
\draw[dotwire] (c22_outL) -- ++(0,-1.2);
\draw[dotwire] (c22_outR) -- ++(0,-1.2);

% Dotted inputs to bottom representative cells
\draw[dotwire] ($(cL1_inR)+(0, 1.2)$) -- (cL1_inR);
\draw[dotwire] ($(cLR_inR)+(0, 1.2)$) -- (cLR_inR);

% Ellipses to indicate omitted subtrees
\node at ($(c21_outR)!0.5!(c22_outL) + (0, -1.8)$) {\Huge $\dots$};
\node at ($(cL1)!0.5!(cLR) + (0, 2.5)$) {\Huge $\dots$};
\node at ($(cL1) + (0, 2.0)$) {\Large $\vdots$};
\node at ($(cLR) + (0, 2.0)$) {\Large $\vdots$};

% ==========================================
% 4. Outputs with per-antenna phase bank (representative)
% ==========================================

\def\outY{-3.6*\levelDist}

\newcommand{\drawOutArrow}[2]{
    % Tap leaf output to a reference height
    \draw[wire] (#1) -- (#1 |- 0, \outY) node[port] (p_#2) {};
    % Per-antenna phase shifter
    \node[ph] (ph_#2) at ($(p_#2)+(0,-0.9)$) {$e^{j\vartheta_{#2}}$};
    \draw[wire] (p_#2) -- (ph_#2.north);
    % Antenna excitation output
    \draw[->, thick] (ph_#2.south) -- ++(0,-0.7) node[below] {$x_{#2}$};
}

\drawOutArrow{cL1_outL}{1}
\drawOutArrow{cL1_outR}{2}
\drawOutArrow{cLR_outL}{N-1}
\drawOutArrow{cLR_outR}{N}

\node at (0, \outY - 1.0) {\Large $\dots$};

% Outer boundary of the programmable passive network
\draw[rounded corners=10pt, thick] (-11.0, 3.2) rectangle (9.0, \outY - 1.7);

\end{tikzpicture}
}
\caption{Proposed fully-analog transmitter architecture as a balanced binary-tree programmable signal distributor (conceptual block diagram, $N=2^L$).
Each block
$\mathrm{Cell}_{\ell,i}$ in the figure denotes one fanout-mode interferometric splitter of
Fig.~\ref{fig:splitter_cell_matched_load}. Thus, the label $\delta_{\ell,i}=2\alpha_{\ell,i}$ is the MZI
differential phase that sets the power split at node $(\ell,i)$. The idle input of every cell is terminated in $Z_0$ to
enforce the zero-incident-wave condition implied by Lemma~\ref{lem:zero_input_property}. After $L$ layers, the
$N$ leaf outputs pass through a per-antenna output phase bank $e^{j\vartheta_n}$ implementing
$\bm P(\bm\vartheta)$, producing the desired antenna excitations $x_1,\dots,x_N$.}
\label{fig:full_layout_tree_block}
\end{figure*}

The construction in \Cref{sec:exact_synthesis} proves that, when $\bm V_{\mathrm{mag}}(\bm\alpha)$ is applied to
the fixed excitation $\bm e_1$, every embedded $2\times2$ splitter is driven with one nonzero input and one
\emph{identically zero} input (Lemma~\ref{lem:zero_input_property}). In the hardware interpretation, we enforce
this operating condition by terminating the corresponding idle input port in a matched load ($Z_0$), so its
incident wave is (ideally) zero and no reflections propagate back into the network.

\begin{remark}[Input-interface variant via fixed concentration]
In some platforms it is convenient to distribute the source using a corporate-feed divider (e.g., for routing)
and then apply a fixed lossless $N$-port that concentrates the resulting uniform excitation into port~1 of the
programmable network. Let
\begin{equation}
\bm u_0 \triangleq \frac{1}{\sqrt{N}}\bm 1,\qquad \|\bm u_0\|_2=1,
\end{equation}
denote the uniform excitation produced by an ideal equal-split divider. Choose a fixed unitary
$\bm U_{\mathrm{in}}\in\mathbb{C}^{N\times N}$ satisfying $\bm U_{\mathrm{in}}\,\bm u_0=\bm e_1$.
Then the programmable network $\bm V(\bm\varphi)\bm U_{\mathrm{in}}$ driven by $\sqrt{P}\bm u_0$
excites $\bm V(\bm\varphi)$ exactly as $\sqrt{P}\bm e_1$, so the programming rule of
\Cref{sec:exact_synthesis} applies without modification. A canonical choice is the normalized $N$-point DFT
(Butler) matrix $\bm F_N$.
\end{remark}

% ------------------------------------------------------------
\subsection{Component count and scalability (idealized)}
\label{subsec:component_count}

The programmable portion of the split-then-phase network consists of:
(i) $N-1$ internal $2\times2$ splitter cells arranged as a balanced binary tree of depth $L=\log_2 N$, and
(ii) an output phase bank of $N$ per-antenna tunable phase shifters.

\smallskip
\noindent\textbf{Per-node cell realization and tunable controls:}
In the MZI-based splitter of Fig.~\ref{fig:splitter_cell_matched_load}, each internal node cell uses:
(i) two fixed $3$\,dB couplers and (ii) one tunable differential phase setting $\delta_{\ell,i}=2\alpha_{\ell,i}$
that controls the split ratio. No additional per-node phase trim is required inside the tree.

\smallskip
\noindent\textbf{Whole-network totals and depth:}
The magnitude tree contains $N-1$ internal cells and therefore uses $2(N-1)$ fixed $3$\,dB couplers. Each
root-to-leaf path traverses exactly $L=\log_2 N$ splitter cells, followed by one output phase shifter. The
tunable degrees of freedom are:
\begin{itemize}
\item $(N-1)$ split controls $\{\delta_{\ell,i}\}$ (equivalently $\{\alpha_{\ell,i}\}$), one per internal node, and
\item $N$ output phases $\{\vartheta_n\}$, one per antenna,
\end{itemize}
for a total of $2N-1$ real tunable controls, matching the minimal parameter count for universal state preparation
with a single driven input mode.

% ============================================================
% Section 5: Non-ideal hardware model and comparison framework
% ============================================================
\section{Non-ideal considerations and normalization framework}
\label{sec:nonideal_framework}

This section defines (i) a minimal passive-contractive abstraction for the realized programmable network and
(ii) a consistent delivered-power normalization boundary that is useful when the proposed fully-analog
architecture is implemented and calibrated in hardware. We emphasize that the present paper does not attempt an
exhaustive impairment-level characterization (e.g., finite phase quantization, coupler/matching errors,
tuning-dependent loss, LUT calibration accuracy, or PA loading effects). Such implementation-specific issues
depend strongly on the chosen phase-control technology and physical layout and are left to future experimental
work.

The development in \Cref{sec:system_model,sec:exact_synthesis,sec:hardware} assumes lossless, perfectly matched
components and exact programmability of (i) the magnitude-tree split settings and (ii) the per-antenna output
phase bank. Practical implementations deviate from this ideal due to insertion loss, amplitude imbalance,
imperfect couplers, static phase offsets, finite phase resolution, and tuning-dependent loss. For the purposes
of specifying calibration targets and comparing efficiency against fully-digital multi-antenna transmitters, it is
sufficient to adopt a \emph{passive contractive} model that separates the synthesized \emph{direction} from the
delivered \emph{scale} and that uses a consistent delivered-power boundary.

\subsection{Passive contractive model and direction error metric}
\label{subsec:nonideal_model_metric}

Let $\bm\eta\in\mathbb{R}^{M}$ denote the vector of hardware control settings (voltages or digital codes) applied
to the tunable elements that realize the split-then-phase architecture: $(N-1)$ split controls in the magnitude
tree and $N$ output phase controls in the phase bank (so $M=2N-1$). In practice, the realized mapping at the
antenna feeds is
\begin{equation}
\hat{\bm x}
\;=\;
\hat{\bm V}(\bm\eta)\,\hat{\bm x}_{\mathrm{in}}
\;=\;
\sqrt{\Pin}\,\hat{\bm V}(\bm\eta)\,\bm e_1,
\label{eq:nonideal_mapping}
\end{equation}
where $\Pin$ is the injected tone power at the driven input port and $\hat{\bm V}(\bm\eta)$ denotes the realized
(generally non-unitary) transfer of the passive interferometric network.

Because the network is passive and (nominally) matched at its external ports, it is non-expansive in the
power-wave sense:
\begin{equation}
\|\hat{\bm x}\|_2^2
=
\Pin\,\big\|\hat{\bm V}(\bm\eta)\bm e_1\big\|_2^2
\;\le\;
\Pin.
\label{eq:passive_contraction}
\end{equation}
It is convenient to separate the normalized transmit-vector \emph{shape} from the delivered \emph{scale}. Define
\begin{equation}
g(\bm\eta)\triangleq \big\|\hat{\bm V}(\bm\eta)\bm e_1\big\|_2 \in (0,1],
\label{eq:g_def}
\end{equation}
and
\begin{equation}
\hat{\bm c}(\bm\eta)
\triangleq
\frac{\hat{\bm V}(\bm\eta)\bm e_1}{\big\|\hat{\bm V}(\bm\eta)\bm e_1\big\|_2},
\qquad
\big\|\hat{\bm c}(\bm\eta)\big\|_2=1.
\label{eq:chat_def}
\end{equation}
Then the realized transmit vector decomposes as
\begin{equation}
\hat{\bm x}
\;=\;
\sqrt{\Pin}\,g(\bm\eta)\,\hat{\bm c}(\bm\eta).
\label{eq:shape_scale_decomp}
\end{equation}

Given a desired transmit vector $\bm x$ with $\|\bm x\|_2^2=P$ and the corresponding normalized direction
$\bm c=\bm x/\sqrt{P}$, a phase-invariant direction mismatch metric is
\begin{equation}
\varepsilon_{\mathrm{dir}}(\bm\eta)
\triangleq 1-\big|\bm c^H\hat{\bm c}(\bm\eta)\big|\in[0,1],
\label{eq:direction_error}
\end{equation}
which equals zero if and only if $\hat{\bm c}(\bm\eta)$ matches $\bm c$ up to a common phase rotation.

In practice, $\varepsilon_{\mathrm{dir}}(\bm\eta)$ is minimized by calibration of the entire device once it is
fabricated. Calibration is performed offline (factory) and occasionally during maintenance, and produces a
look-up table (LUT) that maps \emph{desired} ideal programming parameters
\begin{equation}
\bm\varphi \;\triangleq\; (\bm\alpha,\bm\vartheta)
\end{equation}
(i.e., the magnitude-tree split angles $\bm\alpha=\{\alpha_{\ell,i}\}$ and the output phase-bank settings
$\bm\vartheta=[\vartheta_1,\dots,\vartheta_N]^T$ computed in \Cref{sec:exact_synthesis})
to \emph{hardware} control settings $\bm\eta$. In a future hardware implementation,
$\varepsilon_{\mathrm{dir}}(\bm\eta)$ would be minimized via calibration and LUT-based programming of the
split-control and phase-bank elements.

\subsection{Delivered-power normalization and DC accounting}
\label{subsec:comparison_framework}

To compare transmitter \emph{front ends} at a consistent RF boundary (and to avoid conflating RF hardware with
baseband/compute overhead), we normalize both architectures to the same delivered (conducted) antenna-port RF
power
\begin{equation}
P_{\mathrm{ant,tot}}
\;\triangleq\;
\sum_{n=1}^N P_{n,\mathrm{ant}}
\;=\;
\|\hat{\bm x}\|_2^2.
\label{eq:common_delivered_power_def}
\end{equation}
Here $P_{\mathrm{ant,tot}}$ is the total RF power delivered at the antenna feed ports (conducted power).

For the analog transmitter, \eqref{eq:shape_scale_decomp} implies
\begin{equation}
P_{\mathrm{ant,tot}}
=
\Pin\, g(\bm\eta)^2
\quad\Rightarrow\quad
\Pin = \frac{P_{\mathrm{ant,tot}}}{g(\bm\eta)^2}.
\label{eq:pin_power_choice_nonideal}
\end{equation}
We report the setting-dependent network insertion loss as
\[
L_{\mathrm{net}}(\bm\eta)\triangleq -10\log_{10}\!\big(g(\bm\eta)^2\big)\ \text{dB}.
\]

For DC accounting on the analog side, we use
\begin{equation}
P_{\mathrm{DC,tot}}^{(\mathrm{ana})}
\;\approx\;
\frac{\Pin}{\eta_{\mathrm{PA}}}
\;+\;
P_{\mathrm{DC,ctrl}},
\label{eq:analog_total_dc_normalized}
\end{equation}
where $\eta_{\mathrm{PA}}$ is the effective efficiency of the single PA at output power $\Pin$, and
$P_{\mathrm{DC,ctrl}}$ is the DC overhead of the tunable split-control elements and the output phase-bank
elements (including their bias/control circuitry). In the split-then-phase realization, the number of tunable
controls is $M=2N-1$, so a useful first-order scaling is
\begin{equation}
P_{\mathrm{DC,ctrl}}
\;\approx\;
(2N-1)\,p_{\phi}
\;+\;
p_{\mathrm{ctrl}},
\label{eq:dc_power_ctrl_vd_joint}
\end{equation}
with technology-dependent per-control DC draw $p_{\phi}$ and any non-scaling controller/interface overhead
$p_{\mathrm{ctrl}}$.

\begin{remark}
In the ideal lossless case, the injected tone power equals the delivered antenna-port power, so $\Pin=P$; in the
non-ideal case we distinguish $\Pin$ (injected) from $P_{\mathrm{ant,tot}}$ (delivered).
\end{remark}

For power benchmarking at the same conducted antenna-port boundary $P_{\mathrm{ant,tot}}$ in
\eqref{eq:common_delivered_power_def}, we use a compute-excluded fully-digital RF-front-end DC-power model
anchored to  COTS component data. To keep \Cref{sec:nonideal_framework} focused on the passive-network
normalization and delivered-power boundary, the fully-digital baseline model statement and the sub-6\,GHz
coefficient derivation used in \Cref{sec:numerical_results} are provided in Appendix~\ref{app:fd_sub6_model}.

% ============================================================
% Section 6: Numerical results and practical implications
% ============================================================
\section{Numerical results and practical implications}
\label{sec:numerical_results}

This section consolidates numerical results and the corresponding practical implications for timing feasibility,
passive insertion loss, and RF-front-end DC-power comparison. The loss numbers for the programmable
\emph{balanced binary-tree} network use the per-cell abstraction and stress-case computation summarized in
Appendix~\ref{app:stress_case_loss}. Unless stated otherwise, comparisons are normalized to equal delivered
antenna-port power $P_{\mathrm{ant,tot}}$ as defined in \eqref{eq:common_delivered_power_def}.

All numerical values in this section are intentionally \emph{commercially anchored}: we compare against
\emph{commercial off-the-shelf (COTS)} components and published front-end budgeting examples,
rather than theoretical designs. In particular, the fully-digital baseline uses the \emph{compute-excluded}
RF-front-end DC model in \eqref{eq:digital_dc_decomp}--\eqref{eq:digital_dc_affine}, parameterized from COTS
data, so that the analog and digital power figures are compared at a consistent front-end boundary and do not
depend on platform-specific baseband/compute power.

% ------------------------------------------------------------
\subsection{Symbol-timing budget and OFDM single-subcarrier compatibility}
\label{subsec:symbol_timing_budget}

When fully general symbol-wise synthesis is required, the analog network must be reprogrammed at the symbol
cadence. Let
\begin{equation}
T_{\mathrm{sw}}
\;\triangleq\;
T_{\mathrm{load}} \;+\; T_{\mathrm{tune}} \;+\; T_{\mathrm{settle}}
\label{eq:tsw_def}
\end{equation}
denote the total \emph{reconfiguration-and-settling} time per symbol, where $T_{\mathrm{load}}$ is the
control-interface update time, $T_{\mathrm{tune}}$ is the tuning time of the tunable split/phase elements, and
$T_{\mathrm{settle}}$ captures residual settling. A minimal feasibility condition for symbol-wise operation is
\begin{equation}
\Ts \;\ge\; T_{\mathrm{sw}}.
\label{eq:symbol_timing_budget}
\end{equation}
Microsecond-class tuning supports MHz-class update ceilings in a direct time-domain sense, provided
$T_{\mathrm{load}}$ is kept comparable to (or below) $T_{\mathrm{tune}}$ via interface parallelism.

For an OFDM waveform, the complex coefficient on any \emph{single subcarrier} is constant over one OFDM symbol.
Let $T_{\mathrm{OFDM}}$ denote the OFDM symbol duration (including cyclic prefix if present), i.e.,
\begin{equation}
T_{\mathrm{OFDM}} \;=\; T_{\mathrm{u}} \;+\; T_{\mathrm{CP}},
\label{eq:tofdm_def}
\end{equation}
where $T_{\mathrm{u}}$ is the useful (orthogonality) interval and $T_{\mathrm{CP}}$ is the cyclic-prefix
duration. Typical OFDM numerologies span multiple \emph{time-domain classes} for $T_{\mathrm{u}}$ (and therefore
$T_{\mathrm{OFDM}}$), ranging from a few microseconds to tens of microseconds. For example, using
$T_{\mathrm{u}}=1/\Delta f$ (where $\Delta f$ is the subcarrier spacing), one obtains:
\begin{itemize}
\item \emph{Long-symbol OFDM:} $T_{\mathrm{u}}\approx 66.7\,\mu$s (e.g., $\Delta f=15$\,kHz), so
      $T_{\mathrm{OFDM}}$ is on the order of $\sim 70\,\mu$s including a modest cyclic prefix.
\item \emph{Medium-symbol OFDM:} $T_{\mathrm{u}}\approx 12.8\,\mu$s (e.g., $\Delta f=78.125$\,kHz), so
      $T_{\mathrm{OFDM}}$ is on the order of $\sim 13.6$--$16\,\mu$s depending on cyclic-prefix choice.
\item \emph{Short-symbol OFDM:} $T_{\mathrm{u}}\approx 3.2\,\mu$s (e.g., $\Delta f=312.5$\,kHz), so
      $T_{\mathrm{OFDM}}$ is on the order of $\sim 4\,\mu$s including a typical cyclic prefix.
\end{itemize}

To support OFDM on a \emph{single subcarrier} using symbol-wise reconfiguration, one can set the analog symbol
interval to the OFDM symbol duration, i.e., $\Ts=T_{\mathrm{OFDM}}$, and the timing feasibility condition
\eqref{eq:symbol_timing_budget} becomes
\begin{equation}
T_{\mathrm{sw}} \;\le\; T_{\mathrm{OFDM}}.
\label{eq:tofdm_feasibility}
\end{equation}
In this interpretation, the analog waveform within each OFDM symbol interval can be decomposed into:
(i) a \emph{reconfiguration transient} of duration $T_{\mathrm{sw}}$ during which control updates and settling
occur, followed by (ii) a \emph{steady-state} interval during which the network settings are constant and the
output on each antenna is a pure sinusoid (for that subcarrier). The steady-state duration is
\begin{equation}
T_{\mathrm{ss}}
\;\triangleq\;
T_{\mathrm{OFDM}} - T_{\mathrm{sw}}.
\label{eq:tss_def}
\end{equation}
Hence, when $T_{\mathrm{sw}}$ is \emph{much smaller} than the relevant $T_{\mathrm{OFDM}}$ class, the
fully-analog transmitter can support symbol intervals at least as short as those of the considered OFDM
single-subcarrier waveform, with most of each OFDM symbol interval spent in steady state. Moreover, if
$T_{\mathrm{sw}}$ can be confined within (or below) a cyclic-prefix budget $T_{\mathrm{CP}}$, then the useful
interval $T_{\mathrm{u}}$ can remain effectively undisturbed in an OFDM receiver that discards the cyclic prefix.

In practical implementations, each antenna branch should include appropriate band-pass filtering to suppress any
out-of-band spectral components generated during the control transient in the reconfiguration interval.

The ``Reconf.'' column of Table~\ref{tab:joint_rate_power_summary} lists approximate values of
$T_{\mathrm{sw}}$ for the representative phase-control technologies. The table shows that RF-MEMS
reconfiguration supports long-symbol OFDM (e.g., $\Delta f=15$\,kHz), leaving an approximately
$60\,\mu$s steady-state interval in a $\sim 70\,\mu$s symbol interval. The other phase-control options listed in
Table~\ref{tab:joint_rate_power_summary} support even the shortest-duration OFDM symbols in a direct timing
sense.

The phase-control options in Table~\ref{tab:joint_rate_power_summary} should be interpreted as representative
implementation points rather than as a prescription of a single switch technology. A GaN-switch-based
discrete-delay implementation is attractive when high RF power handling, linearity, and fast switching are
prioritized. Silicon/UltraCMOS FET switch implementations provide a different trade-off, often emphasizing
integration, availability, and control simplicity, while RF-MEMS and digital phase-shifter modules represent
other points in the insertion-loss, speed, and DC-power design space. The proposed architecture is therefore
independent of the specific phase-control technology; the selected COTS entries instantiate representative
loss/speed/power trade-offs.

A discussion of a direct wideband extension of the single-subcarrier OFDM into multiple subcarriers is provided
in Appendix~\ref{appendix:wideband_ofdm}.

\begin{table*}[t]
\centering
\caption{Representative phase-control operating points and the resulting \emph{stress-case} network insertion
loss $L_{\mathrm{net}}=-10\log_{10}g(\bm\eta)^2$ for the split-then-phase balanced binary-tree network
($N=2^L$). Stress case uses the lossless constant-modulus schedule $|c_n|^2=1/N$ (equivalently
$\alpha_{\ell,i}=\pi/4$ for all internal nodes), hybrid excess loss $L_{\mathrm{hyb}}=0.12$\,dB~\cite{ttm_x3c21p2_ds},
and the per-cell loss abstraction summarized in Appendix~\ref{app:stress_case_loss}. The reported
$L_{\mathrm{net}}$ includes the magnitude tree plus the per-antenna output phase bank; for simplicity, the same
per-phase-shifter insertion loss $L_\phi$ is assumed for all tunable phase elements. Values are rounded to
0.1\,dB.}
\label{tab:joint_rate_power_summary}
\footnotesize
\renewcommand{\arraystretch}{1.05}
\setlength{\tabcolsep}{3pt}

\begin{tabularx}{\linewidth}{Y|c|c|c|c|*{4}{c}}
\hline
\textbf{Phase-control option} &
\textbf{Res.} &
\textbf{Reconf.} &
$L_\phi$ (dB) &
$p_{\phi}$ (mW) &
\multicolumn{4}{c}{$L_{\mathrm{net}}$ (dB)}\\
\cline{6-9}
 &  &  &  &  & $N{=}2$ & $N{=}4$ & $N{=}8$ & $N{=}16$\\
\hline
RF-MEMS delay/phase element (product-brief beam-steering shifter example)~\cite{menlo_phase_shifter_brief} &
discrete &
$\sim 10\,\mu$s &
0.2 &
$\sim 0.3$\,mW &
0.5 & 0.9 & 1.2 & 1.6\\
\hline
GaN switch--based discrete delay/phase (two-switch series proxy using TS7225FK SPDT)~\cite{tagore_ts7225fk_ds} &
discrete &
$\sim 0.7\,\mu$s &
0.8 &
$\sim 0.9$\,mW &
1.4 & 2.0 & 2.7 & 3.3\\
\hline
High-power UltraCMOS SPDT switch (two-switch series proxy using PE42823, TX path)~\cite{psemi_pe42823_ds} &
discrete &
$\sim 2\,\mu$s &
1.1 &
$\sim 0.8$\,mW  &
1.9 & 2.6 & 3.4 & 4.1\\
\hline
Low-loss digital phase shifter module (DPS family example)~\cite{nardamiteq_dps05030509} &
6-bit &
$0.2$--$0.5\,\mu$s &
1.4 &
$\sim 250$\,mW &
2.3 & 3.2 & 4.1 & 4.9\\
\hline
\end{tabularx}
\end{table*}

% ------------------------------------------------------------
\subsection{Stress-case insertion loss of the programmable binary tree}
\label{subsec:stress_case_insertion_loss}

Because the synthesis network is passive, the delivered power is $P_{\mathrm{ant,tot}}=\Pin\,g(\bm\eta)^2$,
and therefore higher insertion loss directly inflates the required drive power $\Pin$ via
\eqref{eq:pin_power_choice_nonideal}. Table~\ref{tab:joint_rate_power_summary} reports a conservative stress-case
network insertion loss $L_{\mathrm{net}}=-10\log_{10}g(\bm\eta)^2$, computed using the \emph{lossless
constant-modulus} schedule $|c_n|^2=1/N$ and the per-cell loss model of Appendix~\ref{app:stress_case_loss}.
For a balanced binary tree with equal-size subtrees, the lossless constant-modulus schedule corresponds to
$\alpha_{\ell,i}=\pi/4$ at every internal node.

In a balanced tree with $N=2^L$ leaves, each root-to-leaf path traverses exactly $L=\log_2 N$ programmable
splitting cells and one output phase shifter. As a result, stress-case insertion-loss accumulation along any one
output path scales as $O(\log_2 N)$.

\begin{table}[t]

\scriptsize 
\centering

\caption{Compute-excluded fully-digital multi-antenna transmitter RF-front-end power model coefficients for the sub-6\,GHz
(6\,GHz-class)  COTS regime used in Section~\ref{sec:numerical_results}. The model is
$P_{\mathrm{DC,TX}}^{(\mathrm{dig})}\approx P_{\mathrm{DC,sh}}+\alpha N+\beta P_{\mathrm{ant,tot}}$ with
$P_{\mathrm{DC,sh}}=0$ for transparency. Coefficients and validity range are derived in
Appendix~\ref{app:fd_sub6_model}.}
\label{tab:fd_coeff_summary}

\renewcommand{\arraystretch}{1.2}  
\setlength{\tabcolsep}{2pt}  
\begin{tabularx}{\linewidth}{p{1.2cm}|Y|c|c|Y}
\hline
\textbf{Band} & 
\textbf{Regime} \newline \textbf{(Anchor)} & 
$\boldsymbol{\alpha}$ \newline (W/ch.) & 
$\boldsymbol{\beta}$ \newline (W/W) & 
\textbf{Notes} \\
\hline
sub-6\,GHz \newline (6\,GHz class) &
Wi-Fi FEM (QPF4658 fit) &
2.67 &
3.19 &
Valid for $p_{\mathrm{ant}} \in$ \newline  $[0.08, 0.25]$\,W \\
\hline
\end{tabularx}
\end{table}

% ------------------------------------------------------------
\subsection{Compute-excluded fully-digital baseline (COTS, sub-6\,GHz)}
\label{subsec:fully_digital_baseline_models}

We benchmark the fully-analog front end against a fully-digital multi-antenna transmitter using the compute-excluded
RF-front-end model in \eqref{eq:digital_dc_decomp}--\eqref{eq:digital_dc_affine}. This model deliberately excludes
baseband/compute (precoding, DPD, MAC/PHY, host processing) and isolates the \emph{front-end} DC power required
to deliver a target conducted antenna-port power $P_{\mathrm{ant,tot}}$.

The model parameters are anchored to representative commercial off-the-shelf (COTS) component
data for the sub-6\,GHz (6\,GHz-class) operating point considered in this paper. For conciseness, we summarize
the resulting $(\alpha,\beta)$ coefficients (and their validity range) in \Cref{tab:fd_coeff_summary}. A
self-contained derivation of these coefficients from the underlying datasheet anchors is provided in
Appendix~\ref{app:fd_sub6_model}.

\begin{table*}[t]
\centering
\caption{Compute-excluded equal-$p_{\mathrm{ant}}$ comparison at sub-6\,GHz.
Assumptions: $p_{\mathrm{ant}}=0.2$\,W ($\approx 23$\,dBm) so $P_{\mathrm{ant,tot}}=Np_{\mathrm{ant}}$;
digital: $(\alpha,\beta)=(2.67,\,3.19)$ (Wi-Fi FEM regime; see \Cref{tab:fd_coeff_summary}).
Analog: $\eta_{\mathrm{PA}}=0.5$, $P_{\mathrm{DC,ctrl}}\approx(2N-1)p_\phi$, and $L_{\mathrm{net}}(N)$ from
Table~\ref{tab:joint_rate_power_summary} (stress-case loss model; includes output phase bank under the stated
$L_\phi$ assumption). All values are at the RF-front-end boundary and correspond to COTS anchors.}
\label{tab:equal_power_comparison_sub6}
\scriptsize
\renewcommand{\arraystretch}{1.05}
\setlength{\tabcolsep}{2pt}
\begin{tabularx}{\linewidth}{c|c|c|>{\centering\arraybackslash}X|>{\centering\arraybackslash}X|>{\centering\arraybackslash}X|>{\centering\arraybackslash}X}
\hline
$N$ &
\shortstack{$P_{\mathrm{ant,tot}}$\\(W)} &
\shortstack{Digital\\DC (W)} &
\shortstack{Analog DC\\RF-MEMS (W)} &
\shortstack{Analog DC\\GaN switch (W)} &
\shortstack{Analog DC\\UltraCMOS switch (W)} &
\shortstack{Analog DC\\DPS module (W)} \\
\hline
2  & 0.40 & 6.62  & 0.90 & 1.11 & 1.24 & 2.11 \\
4  & 0.80 & 13.23 & 1.97 & 2.54 & 2.92 & 5.09 \\
8  & 1.60 & 26.46 & 4.22 & 5.97 & 7.01 & 11.98 \\
16 & 3.20 & 52.93 & 9.26 & 13.71 & 16.48 & 27.53 \\
\hline
\end{tabularx}
\end{table*}

\subsection{Component-Role to Representative COTS Anchor Map}
The COTS anchors used in the comparison map directly to the circuit roles in the proposed architecture and in
the fully-digital reference baseline. The fixed $3$\,dB hybrid coupler used in each MZI/fanout splitter is
anchored by the TTM/Xinger--Anaren hybrid data sheet~\cite{ttm_x3c21p2_ds}, whose excess loss
$L_{\mathrm{hyb}}$ is used in the splitter-cell loss model of Appendix~\ref{app:stress_case_loss}. The tunable
split-control and output-phase elements are instantiated by the representative phase-control options in
Table~\ref{tab:joint_rate_power_summary}: the Menlo Micro RF-MEMS delay/phase product
brief~\cite{menlo_phase_shifter_brief}, the TagoreTech TS7225FK GaN SPDT switch used as a two-switch
discrete-delay proxy~\cite{tagore_ts7225fk_ds}, the pSemi PE42823 UltraCMOS SPDT switch used as a
corresponding FET-switch proxy~\cite{psemi_pe42823_ds}, and the Narda-MITEQ digital phase-shifter
module~\cite{nardamiteq_dps05030509}. These anchors provide the phase-control insertion loss $L_{\phi}$,
per-control DC draw $p_{\phi}$, and reconfiguration-time entries used in
Table~\ref{tab:joint_rate_power_summary}. The single-source analog PA term in
\eqref{eq:analog_total_dc_normalized} is anchored by the MACOM SSPA efficiency value
$\eta_{\mathrm{PA}}$~\cite{macom_engad00074_ds}. For the compute-excluded fully-digital reference, the
per-chain RF/transceiver overhead $P_{\mathrm{DC,chain}}$ is anchored by AD9375/ADRV9009 TX-mode
power data~\cite{adi_ad9375_ds,adi_adrv9009_ds}, and the per-antenna PA/FEM affine model is anchored by the Qorvo
QPF4658 Wi-Fi FEM operating points in Appendix~\ref{app:fd_sub6_model}~\cite{qorvo_qpf4658_product}.

% ------------------------------------------------------------
\subsection{Analog--digital comparison at equal $p_{\mathrm{ant}}$ and equal delivered power}
\label{subsec:equal_power_comparison}

We \emph{fix the per-antenna conducted power} $p_{\mathrm{ant}}$ and therefore set
\begin{equation}
P_{\mathrm{ant,tot}} \;=\; N\,p_{\mathrm{ant}}.
\label{eq:optionA_total_power}
\end{equation}
This enforces an equal per-antenna conducted-power operating point across the fully-analog and fully-digital
front ends, and it avoids device-specific aggregation assumptions.

For the analog transmitter, we compute the required injected tone power as
$\Pin=P_{\mathrm{ant,tot}}\cdot 10^{L_{\mathrm{net}}/10}$ using $L_{\mathrm{net}}$ from
Table~\ref{tab:joint_rate_power_summary}, and then evaluate the analog RF-front-end DC power via
\eqref{eq:analog_total_dc_normalized} with\footnote{Because the proposed fully-analog transmitter generates a
single narrowband sinusoid, its PA can be operated close to saturation without the linearity back-off typically
required for high-PAPR modulated waveforms. Commercial sub-6\,GHz GaN SSPAs report about 50\% effective PA
efficiency at saturation in the 4--6\,GHz band; accordingly, for the sub-6\,GHz numerical results we use an
effective PA efficiency of $\eta_{\mathrm{PA}}=0.5$ \cite{macom_engad00074_ds}.}
$\eta_{\mathrm{PA}}=0.5$ and $P_{\mathrm{DC,ctrl}}\approx (2N-1)p_\phi$ (using the $p_\phi$ values of
Table~\ref{tab:joint_rate_power_summary}).

For the fully-digital baseline, we use the compute-excluded COTS model
$P_{\mathrm{DC,TX}}^{(\mathrm{dig})}\approx \alpha N+\beta P_{\mathrm{ant,tot}}$ with coefficients summarized in
\Cref{tab:fd_coeff_summary}. The resulting equal-$p_{\mathrm{ant}}$ comparisons for the sub-6\,GHz (6\,GHz-class)
operating point are reported in \Cref{tab:equal_power_comparison_sub6}.

Table~\ref{tab:equal_power_comparison_sub6} shows that, at the common conducted-power operating point
$p_{\mathrm{ant}}=0.2$\,W ($\approx 23$\,dBm) per antenna, the proposed fully-analog front end requires much less
\emph{compute-excluded} RF-front-end DC power than the fully-digital baseline for all tested array sizes
$N\in\{2,4,8,16\}$ and for all representative phase-control options.

For example, at $N=16$ the compute-excluded fully-digital baseline consumes $52.9$\,W, whereas the fully-analog
architecture consumes $9.26$\,W (RF-MEMS), $13.71$\,W (GaN switch), $16.48$\,W (UltraCMOS switch), and
$27.53$\,W (DPS module). This corresponds to an approximate $48$--$83\%$ reduction in RF-front-end DC power
(equivalently a $1.9\times$--$5.7\times$ improvement) at equal delivered antenna-port power under the stress-case
loss abstraction.

This advantage is driven primarily by eliminating the $N$ RF chains of a fully-digital multi-antenna transmitter (per-chain
transceiver/RFIC overhead that scales linearly with $N$), replacing them with a single PA driving a passive
distribution network; the remaining analog overhead is dominated by the single-PA drive inflation due to passive
insertion loss and by the DC draw of the $(2N-1)$ tunable controls.

From a system viewpoint, these front-end savings enable a clear \emph{bandwidth--energy trade-off}.
Although the proposed transmitter is narrowband (single tone or single OFDM subcarrier in the base architecture),
it preserves the \emph{full $N$-dimensional spatial signaling space} of a fully-digital array by supporting
arbitrary $\bm x[m]$. Hence, standard MIMO gains---beamforming gain and, when the channel supports it, spatial
multiplexing and MU-MIMO---can be exploited without reintroducing the per-antenna RF-chain DC overhead of
fully-digital multi-antenna transmitters. In energy-constrained deployments, this ability to deliver
fully-digital-equivalent MIMO functionality at much lower RF-front-end DC power can help mitigate the throughput
penalty associated with narrowband operation.

We report results up to $N\le 16$ to avoid extrapolating the compute-excluded fully-digital front-end model beyond
the regime where its COTS-anchored coefficients are most representative. For substantially larger arrays,
additional implementation-dependent overheads may introduce costs not captured by the simple affine scaling
$P_{\mathrm{DC,TX}}^{(\mathrm{dig})}\approx \alpha N+\beta P_{\mathrm{ant,tot}}$, potentially leading to
superlinear growth with $N$ in specific architectures. For completeness, Fig.~\ref{fig:compare} shows the
power consumption predicted by all models for $N\leq 4096$.

\begin{figure*}[t]
\centering
\includegraphics[width=\linewidth]{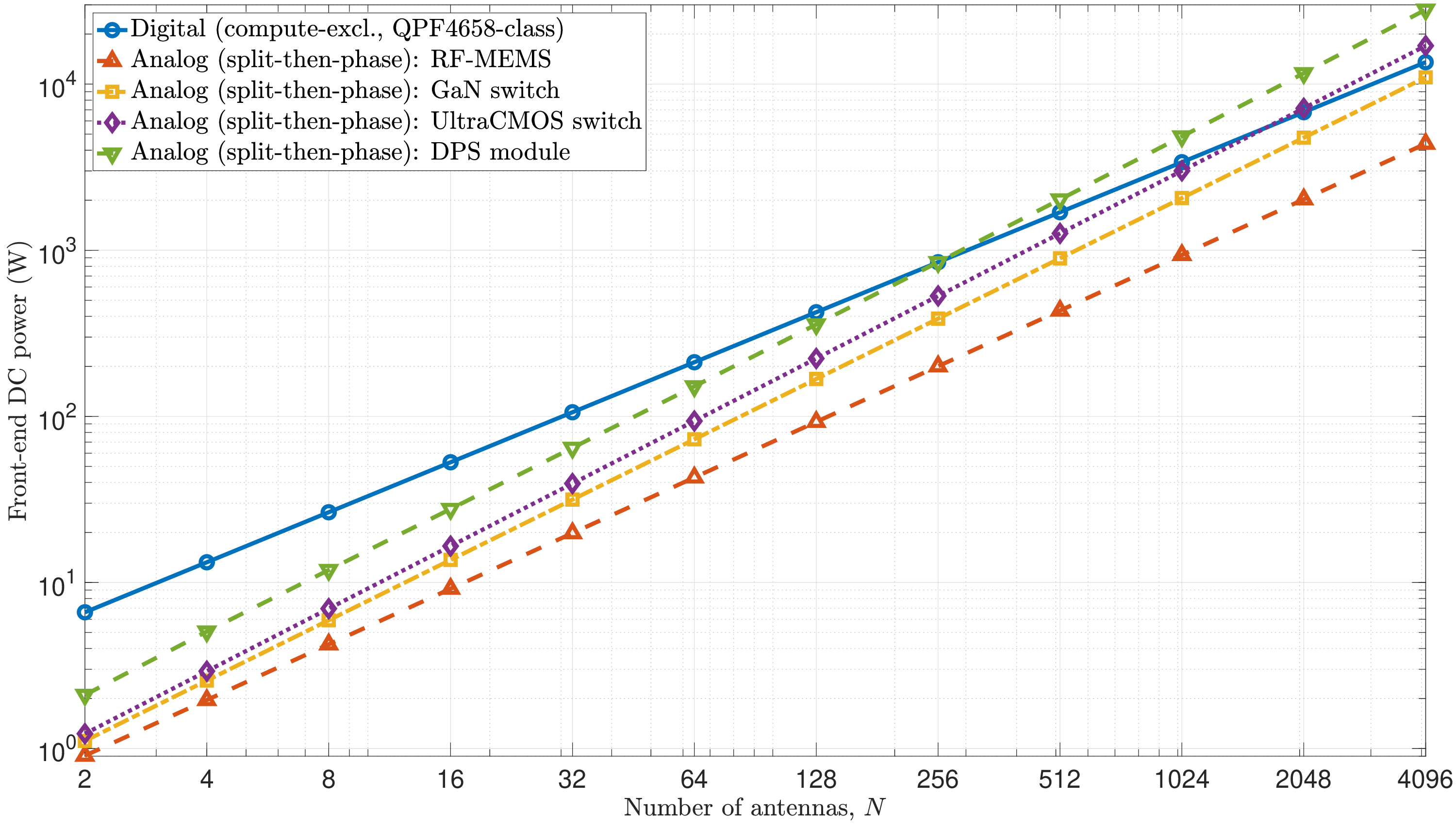}
\caption{Total RF-front-end DC power versus number of antennas $N$ at sub-6\,GHz
(fixed per-antenna conducted power $p_{\mathrm{ant}}=0.2$\,W, so $P_{\mathrm{ant,tot}}=Np_{\mathrm{ant}}$).
Curves compare the compute-excluded fully-digital RF-front-end model against the fully-analog transmitter
parameterized by the stress-case $L_{\mathrm{net}}(N)$ entries in Table~\ref{tab:joint_rate_power_summary}.}
\label{fig:compare}
\end{figure*}

% ------------------------------------------------------------
\subsection{Implications and practical takeaways}
\label{subsec:implications}

Equations \eqref{eq:tofdm_feasibility}--\eqref{eq:tss_def} show that the relevant feasibility question is whether
the reconfiguration transient $T_{\mathrm{sw}}$ is small relative to the intended OFDM symbol duration class
$T_{\mathrm{OFDM}}$. Microsecond-class tuning therefore supports many OFDM operating points directly in the time
domain: it is comfortably below long-symbol OFDM classes (tens of microseconds) and can remain compatible even
with short-symbol OFDM classes (few microseconds) when $T_{\mathrm{sw}}$ is sufficiently small. The steady-state
interval $T_{\mathrm{ss}}=T_{\mathrm{OFDM}}-T_{\mathrm{sw}}$ provides a direct measure of how much of each OFDM
symbol can be spent producing a clean sinusoid under fixed analog settings.

For $N\le 16$, Table~\ref{tab:joint_rate_power_summary} indicates that microsecond-class reconfiguration and
single-digit-dB stress-case insertion loss are simultaneously feasible when the effective per-control insertion
loss is kept near the $\lesssim 1$--$1.5$\,dB regime. The balanced-tree topology ensures that the number of
programmable splitting stages per antenna grows only as $\log_2 N$ (with one additional output phase element).

In \Cref{tab:equal_power_comparison_sub6}, both the analog and digital power figures are evaluated at a
\emph{consistent RF-front-end boundary} and intentionally \emph{exclude} baseband/compute power. The digital
baseline accounts for the per-chain RF/IF hardware overhead plus the PA/FEM contribution needed to deliver
$p_{\mathrm{ant}}$ (compute excluded), while the analog estimate accounts for the single PA drive inflation
caused by passive insertion loss plus the DC draw of the programmable phase-control network. All coefficients and
operating points are anchored to existing COTS components and published commercial front-end budgeting
examples, rather than hypothetical designs.

If one instead seeks \emph{total radio} power (including baseband/compute, platform DC/DC, and system functions),
those terms must be added on top of the front-end figures and will depend strongly on the specific device and
implementation.

The analog transmitter requires symbol-wise programming of $2N-1$ tunable controls. However, the \emph{computation}
to obtain the control settings is closed form and scales as $O(N)$ (cf.~\Cref{sec:exact_synthesis}).
Accordingly, the dominant implementation constraints are typically (i) the tunable-element hardware (tuning
speed, resolution, insertion loss) and (ii) the control-interface loading/settling budget of
\eqref{eq:symbol_timing_budget}, rather than arithmetic complexity.

This numerical comparison is intentionally scoped to sub-6\,GHz operation, where both the phase-control-network
parameters and the compute-excluded fully-digital front-end coefficients can be anchored to publicly available
COTS data at a consistent RF boundary. Extending the same methodology to mmWave is of clear interest, but
doing so credibly requires a similarly consistent set of COTS-anchored insertion-loss, control-power, and
per-chain front-end power coefficients in the intended band and power regime; at present these inputs are less
uniformly available across component classes. A dedicated mmWave comparison using a unified set of validated
anchors and models is therefore left for future work.

% ============================================================
% Section 7: Conclusion and Outlook
% ============================================================
\section{Conclusion and outlook}
\label{sec:conclusion}

This paper developed a narrowband fully-analog $N$-antenna transmitter that reproduces the
\emph{symbol-wise transmit-vector capability} of a narrowband fully-digital $N$-antenna transmitter at the
conducted antenna-port boundary. The proposed architecture is driven by a single coherent RF tone and uses a
passive interferometric network to synthesize an arbitrary desired excitation vector
$\bm x[m]\in\mathbb{C}^N$ with prescribed total conducted power $\|\bm x[m]\|_2^2=P$ in symbol interval $m$. In particular, the synthesized transmit vector may vary arbitrarily from one symbol interval to the next
(subject only to $\|\bm x[m]\|_2^2=P$ and the hardware reconfiguration time).

Crucially, despite its narrowband nature, the proposed front end remains \emph{fully-digital-equivalent in the
spatial domain}: by enabling synthesis of an arbitrary $\bm x[m]$ at the antenna ports, it supports the same
narrowband linear precoding used for spatial multiplexing and multi-user MIMO as fully-digital arrays, but with
substantially lower RF-front-end DC power due to replacing $N$ RF chains by a single source/PA and a passive
unitary distributor.

The synthesis task was posed as a unitary state-preparation problem for the unit-norm direction
$\bm c[m]=\bm x[m]/\sqrt{P}$. The key architectural principle is a \emph{split-then-phase} realization:
a balanced binary-tree \emph{magnitude distributor} first allocates the desired magnitudes
$|c_1[m]|,\dots,|c_N[m]|$ using only tunable lossless $2\times2$ splitter cells operated in fanout mode, and a
per-antenna output phase bank then assigns the desired complex phases. This yields an explicit closed-form,
non-iterative programming rule for the split controls and output phases that achieves exact synthesis in the
ideal lossless model (\Cref{sec:exact_synthesis}). The resulting hardware mapping is compact and scalable: the
tree requires $N-1$ tunable split settings and the phase bank requires $N$ tunable phase settings, totaling
$2N-1$ real degrees of freedom, while using only fixed passive couplers and tunable phase shifters (no
programmable attenuators and no distributed RF gain).

We further introduced a passive contractive non-ideal abstraction and a delivered-power normalization boundary
(\Cref{sec:nonideal_framework}) that separates \emph{direction} accuracy from \emph{delivered} scale and enables a
consistent efficiency comparison against conventional fully-digital RF front ends. Using commercially anchored
COTS parameters and a compute-excluded digital baseline model, the numerical results indicate substantial
RF-front-end DC-power savings for the proposed fully-analog approach for $N\le 16$ at equal delivered
antenna-port power (\Cref{sec:numerical_results}).

Several extensions are of interest. A full hardware demonstration requires calibration and LUT-based
programming under non-unitary effects (insertion loss, static phase offsets, finite resolution, and
tuning-dependent loss). Wideband operation beyond a single narrowband tone motivates architectures that either
parallelize across tones/subcarriers or implement intrinsically broadband passive networks, together with
appropriate filtering to control spectral regrowth during reconfiguration transients.

% ============================================================
% Appendices
% ============================================================
\appendices

\section{MZI transfer matrix and realization of the tunable $2\times2$ splitter}
\label{app:mzi_derivation}

This appendix provides derivations used in \Cref{sec:hardware}:
(i) the transfer matrix of an ideal Mach--Zehnder interferometer (MZI) built from two $3$\,dB couplers and a
differential phase shift, and (ii) its use as a tunable lossless $2\times2$ splitter for the magnitude
distribution tree. We also highlight the fanout operating mode relevant to single-port excitation with one input
matched and unexcited.

\subsection{Derivation of the MZI transfer matrix}

Let $\bm H$ denote the ideal $3$\,dB coupler model in \eqref{eq:hybrid_H}. An MZI with internal differential
phase $\delta$ can be written as
\begin{equation}
\bm U_{\mathrm{MZI}}(\delta)
=
\bm H\,
\mathrm{diag}\!\left(e^{j\delta/2},\,e^{-j\delta/2}\right)
\bm H.
\end{equation}
Substituting \eqref{eq:hybrid_H} and multiplying yields
\begin{equation}
\bm U_{\mathrm{MZI}}(\delta)=
\begin{bmatrix}
\cos(\delta/2) & j\sin(\delta/2)\\
j\sin(\delta/2) & \cos(\delta/2)
\end{bmatrix},
\label{eq:app_mzi_matrix}
\end{equation}
which is unitary for all $\delta$.

\subsection{Equivalence to a tunable fanout splitter}

Define the tunable $2\times2$ splitter
\begin{equation}
\bm U(\alpha)\triangleq
\begin{bmatrix}
\cos\alpha & j\sin\alpha\\
j\sin\alpha & \cos\alpha
\end{bmatrix},
\qquad \alpha\in[0,\pi/2].
\label{eq:app_Ualpha}
\end{equation}
Setting $\delta=2\alpha$ in \eqref{eq:app_mzi_matrix} gives $\bm U_{\mathrm{MZI}}(2\alpha)=\bm U(\alpha)$.
Therefore, an MZI realizes a lossless programmable split ratio via the single differential control $\delta=2\alpha$,
while introducing a \emph{known, fixed} branch phase convention through the factor $j$.

In particular, for any $a\in\mathbb{C}$,
\begin{equation}
\bm U(\alpha)\begin{bmatrix}a\\0\end{bmatrix}
=
\begin{bmatrix}
a\cos\alpha\\
j\,a\sin\alpha
\end{bmatrix},
\label{eq:app_split_fanout}
\end{equation}
so $\alpha$ controls the magnitude split while the relative phase between the two outputs is a fixed $+\pi/2$
offset on the ``$\sin$'' branch under the convention of \eqref{eq:app_Ualpha}.

\subsection{Fanout-mode operation under a matched idle input}

In the single-port excitation regime used in this paper, each internal splitter cell of the balanced tree is
driven in fanout mode with one input carrying the parent subtree wave and the other input terminated in a matched
load (idle input; zero incident wave). Under this operating condition, the magnitude tree allocates the desired
leaf magnitudes by appropriately choosing the split angles $\{\alpha_{\ell,i}\}$, while all target phases are
assigned by a per-antenna output phase bank. Any deterministic fixed phase offsets introduced by layout
(reference-plane conventions, unequal line lengths, or coupler phase conventions) can be absorbed into the output
phase bank via calibration.

\section{Unitarity of embedded splitter cells and disjoint-layer products}
\label{App-B}

This appendix records two facts used by the balanced binary-tree magnitude distributor in \Cref{sec:exact_synthesis}:
(i) any $2\times2$ splitter $\bm U(\alpha)$ embedded into an $N\times N$ identity is unitary, and
(ii) a product of such embeddings acting on disjoint index pairs is unitary (hence each binary-tree layer matrix
is unitary).

\subsection{General embedded splitter $\bm U_{m,n}(\alpha)$}

Let $1\le m<n\le N$. Define $\bm U_{m,n}(\alpha)\in\mathbb{C}^{N\times N}$ as the identity on all coordinates
except the two-dimensional subspace spanned by $\{\bm e_m,\bm e_n\}$, where it acts as $\bm U(\alpha)$ in
\eqref{eq:app_Ualpha}. Entrywise,
\begin{equation}
\big[\bm U_{m,n}(\alpha)\big]_{u,v} =
\begin{cases}
\cos\alpha, & (u,v)\in\{(m,m),(n,n)\},\\[2pt]
j\sin\alpha, & (u,v)\in\{(m,n),(n,m)\},\\[2pt]
1, & u=v\ \text{and}\ u\notin\{m,n\},\\[2pt]
0, & \text{otherwise}.
\end{cases}
\label{eq:Umn_entries}
\end{equation}

\begin{lemma}
\label{lem:Umn_unitary}
For any $1\le m<n\le N$ and any $\alpha\in[0,\pi/2]$, the embedded matrix $\bm U_{m,n}(\alpha)$ is unitary.
\end{lemma}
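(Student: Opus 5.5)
The plan is to write $\bm U_{m,n}(\alpha)$ as a direct sum after a permutation of coordinates and reduce unitarity to Lemma~\ref{lem:splitting}. Let $\bm\Pi$ be the permutation matrix that moves coordinates $m,n$ to positions $1,2$ while keeping the relative order of the other $N-2$ coordinates. From the entrywise description \eqref{eq:Umn_entries},
\[
\bm\Pi\,\bm U_{m,n}(\alpha)\,\bm\Pi^T=\bm U(\alpha)\oplus\bm I_{N-2}.
\]
Every entry outside rows and columns $\{m,n\}$ is $0$ or $1$ on the diagonal, and the $\{m,n\}$ principal block is exactly $\bm U(\alpha)$, so this identity is immediate.

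Unitarity then follows in three steps:
\begin{itemize}
\item By Lemma~\ref{lem:splitting}, $\bm U(\alpha)^H\bm U(\alpha)=\bm I_2$.
\item A block-diagonal matrix whose blocks are unitary is unitary, since $(\bm A\oplus\bm B)^H(\bm A\oplus\bm B)=\bm A^H\bm A\oplus\bm B^H\bm B$.
\item Conjugation by the real orthogonal matrix $\bm\Pi$ preserves unitarity.
\end{itemize}
Hence $\bm U_{m,n}(\alpha)^H\bm U_{m,n}(\alpha)=\bm I_N$.

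As a cross-check, one can also compute $\big[\bm U_{m,n}^H\bm U_{m,n}\big]_{u,v}=\sum_k\overline{[\bm U_{m,n}]_{k,u}}\,[\bm U_{m,n}]_{k,v}$ directly by cases:
\begin{itemize}
\item If $u$ or $v$ lies outside $\{m,n\}$, the column is a standard basis vector (or is orthogonal to it by support), so the entry is $\delta_{uv}$.
\item For $u=v\in\{m,n\}$, the entry is $\cos^2\alpha+\sin^2\alpha=1$.
\item For $(u,v)=(m,n)$, the entry is $\cos\alpha\,(j\sin\alpha)+\overline{(j\sin\alpha)}\cos\alpha=0$.
\end{itemize}

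No step is a real obstacle. The only point needing care is the cross term, where the conjugate of $j\sin\alpha$ is $-j\sin\alpha$; this is what makes the off-diagonal entry vanish. The restriction $\alpha\in[0,\pi/2]$ plays no role, and the argument holds for all real $\alpha$.
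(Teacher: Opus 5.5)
Your proof is correct, but your main argument takes a different route from the paper's. The paper checks column orthonormality directly. The columns indexed outside $\{m,n\}$ are standard basis vectors, and the two remaining columns $\cos\alpha\,\bm e_m+j\sin\alpha\,\bm e_n$ and $j\sin\alpha\,\bm e_m+\cos\alpha\,\bm e_n$ have unit norm and vanishing inner product, so $\bm U_{m,n}^H\bm U_{m,n}=\bm I_N$. That is exactly your ``cross-check.'' Your primary argument instead uses the permutation similarity $\bm\Pi\,\bm U_{m,n}(\alpha)\,\bm\Pi^T=\bm U(\alpha)\oplus\bm I_{N-2}$ and reuses the $2\times 2$ unitarity already recorded in Lemma~\ref{lem:splitting}, so no new computation is needed. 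The paper's check is self-contained and avoids permutation matrices, at the cost of redoing the $2\times 2$ algebra at the column level. Your reduction is more structural and extends verbatim to embedding any unitary block on any coordinate subset. Applied to a whole layer, it also gives Lemma~\ref{lem:disjoint_layer_unitary_splitter} at once. A product over disjoint pairs is permutation-similar to $\bigoplus_k \bm U(\alpha_k)\oplus\bm I$, and each factor is block-diagonal in the same block structure with only one non-identity block, so commutativity and unitarity follow together. Your remark that the range $\alpha\in[0,\pi/2]$ plays no role is also correct.
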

\begin{IEEEproof}
All columns $v\notin\{m,n\}$ equal standard basis vectors $\bm e_v$ and are mutually orthonormal.
The only nontrivial columns are
\[
\big[\bm U_{m,n}\big]_{\cdot,m}=\cos\alpha\,\bm e_m + j\sin\alpha\,\bm e_n,
\]
\[
\big[\bm U_{m,n}\big]_{\cdot,n}=j\sin\alpha\,\bm e_m + \cos\alpha\,\bm e_n.
\]
Their norms satisfy $\cos^2\alpha+\sin^2\alpha=1$, and their inner product is
\begin{align*}
&\big(\big[\bm U_{m,n}\big]_{\cdot,m}\big)^H\big(\big[\bm U_{m,n}\big]_{\cdot,n}\big)
=
\cos\alpha\,(j\sin\alpha)^* + (j\sin\alpha)\,\cos\alpha^*\nonumber\\
&=
-j\cos\alpha\sin\alpha + j\cos\alpha\sin\alpha
=0.
\end{align*}
Hence the columns form an orthonormal set, so $\bm U_{m,n}^H\bm U_{m,n}=\bm I_N$.
\end{IEEEproof}

\subsection{Products over disjoint pairs (layer matrices)}

\begin{lemma}
\label{lem:disjoint_layer_unitary_splitter}
Let $\{(m_k,n_k)\}_{k=1}^K$ be index pairs such that no index appears in more than one pair (i.e., the pairs are
disjoint). Then the embedded splitters $\bm U_{m_k,n_k}(\alpha_k)$ commute, and the product
\[
\bm U \triangleq \prod_{k=1}^{K} \bm U_{m_k,n_k}(\alpha_k)
\]
is unitary.
\end{lemma}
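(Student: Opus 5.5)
The plan has two parts, commutativity first and then unitarity, and both rest on one observation. Each embedded splitter $\bm U_{m_k,n_k}(\alpha_k)$ is block-structured with respect to the coordinate splitting $\{m_k,n_k\}$ versus its complement. On the two-dimensional subspace $\mathcal{E}_k\triangleq\mathrm{span}\{\bm e_{m_k},\bm e_{n_k}\}$ it acts as $\bm U(\alpha_k)$. On $\mathcal{E}_k^\perp$ it acts as the identity. This is read off directly from the entrywise description \eqref{eq:Umn_entries}.

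\emph{Commutativity.} Take $k\neq k'$. Disjointness of the pairs gives $\mathcal{E}_{k'}\subseteq\mathcal{E}_k^\perp$, and the same holds with the roles swapped. I will verify $\bm A_k\bm A_{k'}\bm e_v=\bm A_{k'}\bm A_k\bm e_v$ on every basis vector, writing $\bm A_k\triangleq\bm U_{m_k,n_k}(\alpha_k)$. There are three cases.
\begin{itemize}
\item If $v\in\{m_k,n_k\}$, then $\bm A_{k'}$ fixes $\bm e_v$. Also $\bm A_k\bm e_v\in\mathcal{E}_k$, and $\bm A_{k'}$ fixes that vector as well. So both orders give $\bm A_k\bm e_v$.
\item If $v\in\{m_{k'},n_{k'}\}$, the argument is symmetric and both orders give $\bm A_{k'}\bm e_v$.
\item Otherwise, both matrices fix $\bm e_v$.
\end{itemize}
Equivalently, after a permutation of coordinates both matrices are block diagonal, with nontrivial blocks in disjoint positions, and such matrices commute. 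Pairwise commutativity implies the product is independent of ordering. This justifies the unordered product notation used in \eqref{eq:mag_layer_matrix}.

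\emph{Unitarity.} By Lemma~\ref{lem:Umn_unitary}, each factor satisfies $\bm A_k^H\bm A_k=\bm I_N$. For the product, I will induct on $K$: if $\bm B$ and $\bm C$ are unitary, then $(\bm B\bm C)^H\bm B\bm C=\bm C^H\bm B^H\bm B\bm C=\bm C^H\bm C=\bm I_N$. This step does not need disjointness at all. Alternatively, one can describe the product directly as the matrix acting as $\bm U(\alpha_k)$ on each $\mathcal{E}_k$ and as the identity on the orthogonal complement of $\bigoplus_k\mathcal{E}_k$. It is then a direct sum of unitary blocks on mutually orthogonal subspaces, which is unitary.

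There is no real obstacle here. The only point requiring care is stating cleanly why disjointness forces each factor to fix the support of the others; the case analysis on basis vectors above handles that.
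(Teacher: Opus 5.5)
Your proposal is correct and takes the same route as the paper. Disjointness means each embedded splitter acts as the identity on the other's active two-dimensional subspace, so they commute, and unitarity follows from Lemma~\ref{lem:Umn_unitary} plus closure of unitaries under products; your basis-vector case check just spells out the commutativity step that the paper states in one line.
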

\begin{IEEEproof}
If two embedded matrices act on disjoint coordinate sets, they act as identity on each other's active
two-dimensional subspaces; hence they commute. Each factor is unitary by Lemma~\ref{lem:Umn_unitary}, and a
product of unitary matrices is unitary. Therefore, $\bm U$ is unitary.
\end{IEEEproof}

Lemma~\ref{lem:disjoint_layer_unitary_splitter} implies that each balanced-tree layer matrix (formed as a
product of disjoint embedded $2\times2$ splitters) is unitary. Since the per-antenna phase bank is diagonal with
unit-modulus entries, it is also unitary. Therefore, the overall ideal mapping (magnitude tree followed by the
output phase bank) is unitary in the lossless matched model.

\section{Stress-case insertion-loss computation for the balanced binary tree}
\label{app:stress_case_loss}

This appendix summarizes the per-cell loss abstraction used to compute the stress-case insertion-loss values in
Table~\ref{tab:joint_rate_power_summary} for the \emph{balanced binary-tree} network with an output phase bank.
The intent is to keep the main body focused on implications while retaining a transparent engineering model for
reproducing the reported numbers.

\subsection{Per-cell and output-bank loss factors}

Let $L_{\mathrm{hyb}}$ denote the excess loss (dB) of one nominal $3$\,dB hybrid coupler and let $L_\phi$ denote
the insertion loss (dB) of one tunable phase-control element. Define the corresponding power transmission
factors
\[
\rho_{\mathrm{2hyb}}\triangleq 10^{-2L_{\mathrm{hyb}}/10},\qquad
\rho_\phi\triangleq 10^{-L_\phi/10}.
\]
Following the same cell-level abstraction as in the main text, each magnitude-tree splitter cell (two hybrids
plus one internal tunable phase element) is approximated by a \emph{common} power transmission factor $\rho_c$
multiplying both outputs:
\begin{equation}
\rho_c \;\approx\; \rho_{\mathrm{2hyb}}\cdot \frac{1+\rho_\phi}{2}.
\label{eq:cell_loss_factor_common}
\end{equation}
The per-antenna output phase bank contributes one additional tunable phase element per output path; denote its
power transmission factor by
\[
\rho_{\mathrm{out}}\triangleq 10^{-L_{\mathrm{out}}/10}.
\]
In Table~\ref{tab:joint_rate_power_summary} we use the simplifying assumption $L_{\mathrm{out}}=L_\phi$ (same
technology class for all tunable phase elements), so $\rho_{\mathrm{out}}=\rho_\phi$.

\subsection{Tree recursion and delivered-power fraction}

Assume $N=2^L$ and index the internal splitter cells by levels $\ell\in\{1,\dots,L\}$, where $\ell=1$ is the
root. Let $p_{\ell,i}$ denote the power entering cell $(\ell,i)$, with unit input power
\begin{equation}
p_{1,1}=1.
\end{equation}
Under the abstraction \eqref{eq:cell_loss_factor_common}, cell $(\ell,i)$ produces the child powers
\begin{align}
p_{\ell+1,2i-1}
&= \rho_c\,p_{\ell,i}\,\cos^2\alpha_{\ell,i},
\nonumber\\
p_{\ell+1,2i}
&= \rho_c\,p_{\ell,i}\,\sin^2\alpha_{\ell,i},
\qquad \ell=1,\dots,L.
\label{eq:bt_power_recursion}
\end{align}
The leaf (antenna-port) powers correspond to the auxiliary level $\ell=L+1$, i.e., $p_{L+1,n}$ for
$n=1,\dots,N$.

The delivered power fraction of the magnitude tree alone is $\sum_{n=1}^{N} p_{L+1,n}$. Including the output
phase bank, the delivered-power fraction becomes
\begin{equation}
g(\bm\eta)^2 \;\approx\; \rho_{\mathrm{out}}\sum_{n=1}^{N} p_{L+1,n},
\label{eq:bt_g2_sum}
\end{equation}
and the corresponding network insertion loss is $L_{\mathrm{net}}=-10\log_{10}(g(\bm\eta)^2)$.

\subsection{Stress-case schedule (lossless constant modulus)}

For the stress case used in Table~\ref{tab:joint_rate_power_summary}, we use the \emph{lossless constant-modulus}
schedule $|c_n|^2=1/N$. For a balanced binary tree with equal-size left and right subtrees at every internal
node, the corresponding ideal split is equal at every node:
\[
\cos^2\alpha_{\ell,i}=\sin^2\alpha_{\ell,i}=\frac{1}{2}
\qquad\Rightarrow\qquad
\alpha_{\ell,i}=\frac{\pi}{4}.
\]
Substituting $\alpha_{\ell,i}=\pi/4$ into \eqref{eq:bt_power_recursion}, the \emph{total} power leaving each node
is multiplied by $\rho_c$ per level. Since the tree has depth $L=\log_2 N$, the magnitude tree delivers the
fraction $\rho_c^{\,L}$. Including the output phase bank gives the closed form
\begin{equation}
g(\bm\eta)^2 \;\approx\; \rho_{\mathrm{out}}\cdot \rho_c^{\,L},
\qquad L=\log_2 N.
\label{eq:bt_g2_closed_form}
\end{equation}
Evaluating \eqref{eq:bt_g2_closed_form} with $L_{\mathrm{hyb}}=0.12$\,dB and the $L_\phi$ values listed in
Table~\ref{tab:joint_rate_power_summary} (and $L_{\mathrm{out}}=L_\phi$) reproduces the reported stress-case
$L_{\mathrm{net}}$ entries (up to the stated rounding).

\section{Compute-excluded fully-digital baseline model and sub-6\,GHz coefficient derivation}
\label{app:fd_sub6_model}

This appendix states the compute-excluded fully-digital RF-front-end DC-power model used for benchmarking and
derives the sub-6\,GHz (6\,GHz-class) coefficients $(\alpha,\beta)$ summarized in \Cref{tab:fd_coeff_summary}.
The model intentionally excludes baseband/compute (precoding, DPD, PHY/MAC, host SoC) and accounts only for the
RF-front-end power required to deliver a target conducted antenna-port power $P_{\mathrm{ant,tot}}$.

\subsection{Compute-excluded fully-digital RF-front-end model (generic form)}
Let $p_{\mathrm{ant}}\triangleq P_{\mathrm{ant,tot}}/N$ denote the per-antenna conducted power. We model the
compute-excluded fully-digital RF-front-end DC power as
\begin{align}
P_{\mathrm{DC,TX}}^{(\mathrm{dig})}(N,P_{\mathrm{ant,tot}})
& \; \approx\;
P_{\mathrm{DC,sh}}
\;+\;
N\,P_{\mathrm{DC,chain}}\nonumber\\
& \;+\;
N\,P_{\mathrm{DC,PA}}\!\left(p_{\mathrm{ant}}\right),
\label{eq:digital_dc_decomp}
\end{align}
where $P_{\mathrm{DC,chain}}$ is the per-chain transceiver/RFIC overhead (DACs, mixers/IQ, LO distribution,
drivers/bias) and $P_{\mathrm{DC,PA}}(\cdot)$ is the PA/FEM DC power needed to deliver $p_{\mathrm{ant}}$ at the
antenna port. The shared term $P_{\mathrm{DC,sh}}$ captures any non-scaling front-end overhead (set to $0$ in
\Cref{sec:numerical_results} for transparency).

Over a relevant operating range, $P_{\mathrm{DC,PA}}(p_{\mathrm{ant}})$ is well-approximated by an affine model
$P_{\mathrm{DC,PA}}(p_{\mathrm{ant}})\approx a + b\,p_{\mathrm{ant}}$, yielding
\begin{equation}
P_{\mathrm{DC,TX}}^{(\mathrm{dig})}(N,P_{\mathrm{ant,tot}})
\;\approx\;
P_{\mathrm{DC,sh}}
\;+\;
\alpha\,N
\;+\;
\beta\,P_{\mathrm{ant,tot}},
\label{eq:digital_dc_affine}
\end{equation}
with $\alpha\triangleq P_{\mathrm{DC,chain}}+a$ and $\beta\triangleq b$.

\subsection{Per-chain transceiver overhead $P_{\mathrm{DC,chain}}$}

To anchor the per-chain RF/IF overhead (DACs, mixers/IQ modulator, LO distribution, and driver/bias), we use
TX-mode power-dissipation specifications of integrated dual-transmitter RF transceivers.

For the AD9375, the datasheet reports a total TX-mode power dissipation of $3.70$\,W with two Tx channels
enabled at $0$\,dB RF attenuation, and $3.11$\,W at $15$\,dB RF attenuation (see Table~2 in
\cite{adi_ad9375_ds}). Similarly, for the ADRV9009 the datasheet reports $3.68$\,W and $3.11$\,W, respectively,
under analogous TX-mode conditions (see Table~2 in \cite{adi_adrv9009_ds}).

These anchors correspond to a per-transmit-chain overhead of roughly $1.56$--$1.85$\,W/chain across devices and
attenuation settings. In the main text we adopt the representative value
\begin{equation}
P_{\mathrm{DC,chain}}(f_c)\;\approx\;P_{\mathrm{chain}}^{(\mathrm{sub6})}\;\triangleq\;1.8~\mathrm{W/chain},
\label{eq:sub6_Pchain_choice}
\end{equation}
which is consistent with the TX-mode dual-Tx dissipation figures above at sub-6\,GHz frequencies.

\subsection{PA/FEM affine model $P_{\mathrm{DC,PA}}(p_{\mathrm{ant}})$ from a 6\,GHz Wi-Fi FEM}

As the representative sub-6\,GHz PA/FEM class, we use the Qorvo QPF4658 6\,GHz Wi-Fi FEM. The vendor product
table provides typical transmit supply currents at $V_{\mathrm{CC}}=5$\,V for several modulated output-power
operating points \cite{qorvo_qpf4658_product}. The three points used here are summarized in
\Cref{tab:qpf4658_points}. The conducted antenna-port power is
$p_{\mathrm{ant}}=10^{(P_{\mathrm{out,dBm}}-30)/10}$\,W and the corresponding DC power is
$P_{\mathrm{DC}}=V_{\mathrm{CC}}I_{\mathrm{CC}}$.

\begin{table}[t]
\centering
\caption{Representative QPF4658 transmit operating points used to anchor the per-antenna PA/FEM DC model
(typical values from the vendor product table \cite{qorvo_qpf4658_product}).}
\label{tab:qpf4658_points}
\footnotesize
\renewcommand{\arraystretch}{1.05}
\setlength{\tabcolsep}{4pt}
\begin{tabular}{c|c|c|c|c}
\hline
\textbf{Mode} & $P_{\mathrm{out}}$ (dBm) & $p_{\mathrm{ant}}$ (W) & $I_{\mathrm{CC}}$ (mA) & $P_{\mathrm{DC}}$ (W)\\
\hline
11ax & 19 & 0.0794 & 225 & 1.125\\
11ac & 23 & 0.1995 & 300 & 1.500\\
11n  & 24 & 0.2512 & 335 & 1.675\\
\hline
\end{tabular}
\end{table}

Over this operating range, we fit an affine model
\begin{equation}
P_{\mathrm{DC,PA}}(p_{\mathrm{ant}})\;\approx\; a + b\,p_{\mathrm{ant}},
\qquad p_{\mathrm{ant}}\in[0.0794,\,0.2512]~\mathrm{W}.
\label{eq:qpf4658_affine_model}
\end{equation}
With $K$ data points $\{(p_k,P_k)\}_{k=1}^{K}$, the closed-form least-squares fit is
\begin{align}
b &\;=\; \frac{K\sum_k p_kP_k - \big(\sum_k p_k\big)\big(\sum_k P_k\big)}
              {K\sum_k p_k^2-\big(\sum_k p_k\big)^2},
\nonumber\\
a &\;=\; \frac{1}{K}\sum_k P_k \;-\; b\,\frac{1}{K}\sum_k p_k.
\label{eq:affine_fit_closed_form}
\end{align}
Evaluating \eqref{eq:affine_fit_closed_form} for \Cref{tab:qpf4658_points} yields
\begin{equation}
a \approx 0.870~\mathrm{W},
\qquad
b \approx 3.188~\mathrm{W/W}.
\label{eq:qpf4658_fit_ab}
\end{equation}

\subsection{Resulting $(\alpha,\beta)$ coefficients}

Substituting \eqref{eq:sub6_Pchain_choice} and \eqref{eq:qpf4658_fit_ab} into the affine fully-digital
front-end model \eqref{eq:digital_dc_affine} with $P_{\mathrm{DC,sh}}=0$ gives
\begin{align}
\alpha &\;\triangleq\; P_{\mathrm{DC,chain}}+a \;\approx\; 1.8+0.870 \;\approx\; 2.67~\mathrm{W/chain},
\nonumber\\
\beta &\;\triangleq\; b \;\approx\; 3.19~\mathrm{W/W},
\label{eq:sub6_alpha_beta_result}
\end{align}
valid for $p_{\mathrm{ant}}=P_{\mathrm{ant,tot}}/N \in [0.0794,\,0.2512]$\,W.

\section{Extension to multi-subcarrier OFDM via parallelization}
\label{appendix:wideband_ofdm}

A direct wideband extension of the single-subcarrier discussion is to realize each active OFDM subcarrier with
its own dedicated copy of the proposed narrowband passive distributor and its own dedicated input tone.
Specifically, let $\mathcal{K}$ denote the set of active subcarriers with tone frequencies $f_c+f_k$. For each
$k\in\mathcal{K}$, drive an independent programmable passive network
$\hat{\bm V}^{(k)}(\bm\eta^{(k)})$ with an input tone of power $\Pin^{(k)}$ at $f_c+f_k$, producing the
per-subcarrier antenna excitation vector
$\hat{\bm x}^{(k)}[m]=\sqrt{\Pin^{(k)}}\,\hat{\bm V}^{(k)}(\bm\eta^{(k)})\bm e_1$.
The radiated waveform at antenna $n$ is then the multitone sum
$x_n(t)=\sqrt{2}\Re\{\sum_{k\in\mathcal{K}} \hat{x}^{(k)}_n[m]\,e^{j2\pi(f_c+f_k)t}\}$ over the OFDM symbol interval.
Because OFDM subcarriers are orthogonal, the average conducted antenna-port power over the symbol adds across
subcarriers,
$P_{\mathrm{ant,tot}}=\sum_{k\in\mathcal{K}}\|\hat{\bm x}^{(k)}[m]\|_2^2$.

With the passive contraction factor on subcarrier $k$ given by
$g_k\triangleq\|\hat{\bm V}^{(k)}(\bm\eta^{(k)})\bm e_1\|_2$, each tone must satisfy
$\|\hat{\bm x}^{(k)}[m]\|_2^2=\Pin^{(k)}g_k^2$ (cf.~\eqref{eq:pin_power_choice_nonideal}), hence
$\Pin^{(k)}=\|\hat{\bm x}^{(k)}[m]\|_2^2/g_k^2$ and the total injected RF power is
$P_{\mathrm{in,tot}}=\sum_k \Pin^{(k)}$.
If the replicated per-subcarrier networks have approximately equal insertion loss across the occupied band (the
intended operating assumption of this conceptual construction), i.e., $g_k\approx g$ for all
$k\in\mathcal{K}$ (equivalently $L_{\mathrm{net}}^{(k)}\approx L_{\mathrm{net}}$), then
\begin{align}
P_{\mathrm{in,tot}}
&=\sum_{k\in\mathcal{K}}\frac{\|\hat{\bm x}^{(k)}[m]\|_2^2}{g^2}
=\frac{1}{g^2}\sum_{k\in\mathcal{K}}\|\hat{\bm x}^{(k)}[m]\|_2^2
\nonumber\\
&=\frac{P_{\mathrm{ant,tot}}}{g^2}
= P_{\mathrm{ant,tot}}\cdot 10^{L_{\mathrm{net}}/10},
\end{align}
so the dominant PA term in the analog front-end DC accounting \eqref{eq:analog_total_dc_normalized} depends on
the same total delivered power $P_{\mathrm{ant,tot}}$ and the same insertion-loss inflation factor as in the
single-tone case; in this sense, splitting the waveform into multiple dedicated narrowband tone paths does not
change the aggregate required RF drive (and thus PA DC) at fixed $P_{\mathrm{ant,tot}}$.

This per-subcarrier parallelization is a simple conceptual extension, and for a practical OFDM system that uses
many closely spaced subcarriers, it would require a large number of replicated networks. An interesting future
research direction is the development of a more hardware-efficient wideband architecture that retains the exact
narrowband synthesis property while avoiding per-subcarrier replication.

\bibliographystyle{IEEEtran}
\bibliography{biblio_final}

\end{document}